\theoremstyle{plain}
    \newtheorem{theorem}{Theorem}
    \newtheorem{lemma}{Lemma}
    \newtheorem{claim}{Claim}
\theoremstyle{definition}
\crefname{definition}{definition}{definitions}
\crefname{theorem}{theorem}{theorems}
\crefname{corollary}{corollary}{corollaries}
\crefname{lemma}{lemma}{lemmas}
\crefname{proposition}{proposition}{propositions}
\crefname{claim}{claim}{claims}
\crefname{remark}{remark}{remarks}
\newcommand{\D}{\mathcal{D}}
\renewcommand{\S}{\mathcal{S}}
\newcommand{\I}{\mathcal{I}}
\newcommand{\F}{\mathcal{F}}
\newcommand{\E}{\mathbb{E}}
\renewcommand{\P}{\mathbb{P}}
\newcommand{\R}{\mathbb{R}}
\newcommand{\bra}[1]{\left[#1\right]}
\newcommand{\cbra}[1]{\left\{#1\right\}}
\newcommand{\Par}[1]{\left(#1\right)}
\newcommand{\Abs}[1]{\left|#1\right|}
\newcommand{\core}{\text{core}}
\newcommand{\val}{\text{val}}
\newcommand{\SAA}{\text{SAA}}
\newcommand{\1}{\mathbbm{1}}
\def\keywordname{{\bf Keywords:}}
\providecommand{\keywords}[1]{\def\and{{\textperiodcentered} }
\par\addvspace\baselineskip
\noindent\keywordname\enspace\ignorespaces#1}
\title{Two-stage Stochastic Assignment Games}
\author[1]{Laura Sanità}
\author[2]{Lucy Verberk\(^{(\text{\Letter})}\)}
\affil[1]{
    Bocconi University of Milan, Italy. \protect \\
    {\normalsize\tt{laura.sanita@unibocconi.it}}
}
\affil[2]{
    Eindhoven University of Technology, Netherlands. \protect \\
    {\normalsize\tt{l.p.a.verberk@tue.nl}}
}
\date{}
\begin{document}
\maketitle

\begin{abstract}
    In this paper, we study a two-stage stochastic version of the \emph{assignment game}, which is a fundamental cooperative game.
    Given an initial setting, the set of players may change in the second stage according to some probability distribution, and the goal is to find core solutions that are minimally modified.
    
    When the probability distribution is given explicitly, 
    we observe that the problem is polynomial time solvable, as it can be modeled as an LP. More interestingly, we prove that the underlying polyhedron is \emph{integral}, and exploit this in two ways. 
    
    First, integrality of the polyhedron allows us to show that the problem can be well approximated when the distribution is unknown, which is a hard setting.
    
    Second, we can establish an intimate connection to the well-studied multistage \emph{vertex cover} problem. Here, it is known that the problem is NP-hard even when there are only 2 stages and the graph in each stage is bipartite. As a byproduct of our result, we can prove that the problem is polynomial-time solvable if the bipartition is the same in each stage.
    
    \keywords{Assignment Game \and Two-stage Stochastic Optimization.}
\end{abstract}

\section{Introduction}

The \emph{Assignment game} is one of the most fundamental cooperative games, introduced in the seminal paper of Shapley and Shubik~\cite{Shapley1971Assignment}. An instance is defined on a bipartite graph \(G = (V,E)\), where the vertices represent players. Each player can participate in one coalition with a subset of other players. If a subset of players \(S \subseteq V\) decides to form a coalition, they can distribute a total value of \(\nu(G[S])\) among them, where \(G[S]\) is the subgraph of \(G\) induced by the vertices in \(S\), and  \(\nu(G)\) denotes the value of a maximum (cardinality) matching in the graph \(G\).
We define an allocation vector \(y \in \mathbb{R}^V_{\geq0}\), where \(y_v\) is the value allocated to player \(v\). An allocation \(y\) is \emph{stable} if no subset of players has an incentive to deviate from the current set of coalitions, to form their own coalition. Mathematically, \(y\) is stable if
\begin{equation*}
    \sum_{v \in S} y_v \geq \nu(G[S]) \quad \forall S \subseteq V.
\end{equation*}
The \emph{core} of an assignment game is the set of all stable allocations when the \emph{grand coalition} (\(S = V\)) is formed:
\begin{equation*}
    \core(G) = \left\{y \in \mathbb{R}^V_{\geq0}: 
        \sum_{v \in S} y_v \geq \nu(G[S]) \ \forall S \subseteq V, 
        \sum_{v \in V} y_v = \nu(G) \right\}.
\end{equation*}

In this paper, we study a \emph{two-stage stochastic version} of the assignment game. Studying combinatorial problems in the two-stage setting is a popular area of research (see e.g., \cite{Bampis2021Online,Charikar2005Sampling,Gupta2014Changing,Lee2020Maximum,Ravi2006Hedging,Swamy2008Sample,Swamy2012Sampling-based}). Recently, this setting has been studied for prominent game theory problems such as stable matchings~\cite{Bampis2023Online,Faenza2024Two-stage}.
In particular, the authors of~\cite{Faenza2024Two-stage} study a two-stage stochastic stable matching problem where in the second stage the set of vertices changes. We here analyze this setting in the context of the assignment game.

Given a first-stage assignment game instance, in a second stage we can have some players leaving the game, new players joining the game, and/or some additions and removals in the edge set of the original instance.
Formally, we represent this as having a new graph describing the instance in the second stage, that can be any bipartite graph as long as it keeps the same bipartition as the first-stage graph for the vertices that stay in the game. 
The second-stage instance is sampled from some distribution \(\D\). We denote the starting (first-stage) graph by \(G_0 = (V_0, E_0)\), and for any second-stage scenario \(S \sim \D\), we denote the corresponding graph by \(G_S = (V_S, E_S)\). 
The goal is to minimize the expected total loss of the remaining players (i.e.\ decrease in allocation value). Mathematically, the two-stage stochastic assignment game is
\begin{equation}
    \label{eq: 2SAG}
    \tag{2SAG}
    \min_{y \in \core(G_0)} \E_{S \sim \D} \bra{ \min_{y^S \in \core(G_S)} \sum_{v \in V_0 \cap V_S} \lambda_v \bra{y_v - y^S_v}^+ },
\end{equation}
where \([x]^+ = \max\cbra{0, x}\), and \(\lambda \geq 0\) is the dissatisfaction of players per unit loss of allocation value.

\paragraph{Our results.}
    We first consider the setting where the probability distribution \(\D\) is given explicitly in \Cref{sec:explicit}. We observe that the problem can be modeled as a linear program (LP), and hence it is solvable in polynomial time. Interestingly, we prove that the feasible region is an \emph{integral} polyhedron. For this, we show that the problem can be modeled as a flow problem in a suitable auxiliary graph, and then exploit duality. We leverage this integrality result in two ways.

    First, we exploit it when considering a probability distribution given implicitly, as described in \Cref{sec:implicit}.
    The integrality result allows us to mimic the arguments used in \cite{Faenza2024Two-stage} for two-stage stable matching,
    hence showing that in this setting the problem is computationally hard to solve, but it can be approximated using the well-known sample average approximation (SAA) method~\cite{Kleywegt2002Sample}.

    Second, the integrality property reveals a close relationship with the well-known \emph{multistage vertex cover} problem, which we discuss in \Cref{sec:multistage}. It is known that the multistage vertex cover problem is NP-hard even with only two stages and bipartite graphs at each stage \cite{Fluschnik2022Mutlistage}. However, as a consequence of our findings, we can show that the problem becomes polynomial-time solvable when the bipartition remains consistent across all stages. 
\subsection{Preliminaries}

The standard linear programming relaxation of \(\nu(G)\) is defined as
\begin{equation*}
    \nu_f(G) = \max\cbra{\sum_{e \in E} x_e : \sum_{u: uv \in E} x_{uv} \leq 1 \ \forall v \in V, x \geq 0}.
\end{equation*}
The dual of \(\nu_f(G)\) is
\begin{equation*}
    \tau_f(G) = \min\cbra{1^\top y : y_u + y_v \geq 1 \ \forall uv \in E, y \geq 0},
\end{equation*}
where we denote by \(1^\top y = \sum_{v \in V} y_v\).
We refer to the \(y\) in this formulation as a fractional vertex cover, and we let \(\tau(G)\) be the value of a minimum (integral) vertex cover. It follows from Königs theorem (\(\nu(G) = \tau(G)\) for bipartite graphs) and LP theory (\(\nu(G) \leq \nu_f(G) = \tau_f(G) \leq \tau(G)\)) that \(\tau_f(G) = \nu(G)\) for bipartite graphs.
Shapley and Shubik~\cite{Shapley1971Assignment} showed that the core of an assignment game is precisely the set of minimum fractional vertex covers, i.e.,
\begin{equation*}
    \core(G) = \left\{y \in \mathbb{R}^V_{\geq0}: 
        y_u + y_v \geq 1 \ \forall uv \in E, 
        1^\top y = \nu(G) \right\}.
\end{equation*}
From this and \(\tau_f(G) = \nu(G)\) it readily follows that the core of each assignment game is nonempty: there is always a minimum fractional vertex cover. This is important for our two-stage stochastic assignment game, because we are assured that in both stages the core is nonempty.

Observe that in any core element \(y \leq 1\), because \(y_u + y_v \geq 1\) for all edges in a maximum matching \(M\), \(1^\top y = \nu(G) = |M|\), and \(y \geq 0\).
\section{Explicit Distribution}
\label{sec:explicit}

Suppose the distribution \(\D\) is given explicitly by a list of scenarios \(\S\) and their respective probabilities of occurrence \(\cbra{p_S}_{S \in \S}\). 
Here we will consider the problem of minimizing the absolute difference, instead of the positive difference, i.e., \(|y_v - y^S_v|\) instead of \([y_v - y^S_v]^+\). We solve this in such a way that one can later choose either option, or even \([y^S_v - y_v]^+\).
Using the scenarios \(\S\), we can expand the expectation in \eqref{eq: 2SAG}:
\begin{equation}
    \label{eq: 2SAG explicit}
    \tag{2SAG-expl}
    \min_{y \in \core(G_0)} \sum_{S \in \S} p_S \bra{ \min_{y^S \in \core(G_S)} \sum_{v \in V_0 \cap V_S} \lambda_v \Abs{y_v - y^S_v} }.
\end{equation}
We can rewrite this as the following LP.
\begin{equation}
\label{eq: 2SAG LP}
\tag{2SAG-LP}
\begin{split}
    \min \quad 
        & \sum_{S \in \S} p_S \sum_{v \in V_0 \cap V_S} \lambda_v (\delta^S_v + d^S_v) \\
    \text{s.t.} \quad
        & y_u + y_v \geq 1 \quad \forall uv \in E_0 \\
        & 1^\top y = \nu(G_0) \\
        & y \in \R^{V_0}_{\geq 0} \\
        & y^S_u + y^S_v \geq 1 \quad \forall uv \in E_S, \forall S \in \S \\
        & 1^\top y^S = \nu(G_S) \quad \forall S \in \S \\
        & y^S \in \R^{V_S}_{\geq 0} \quad \forall S \in \S \\
        & y_v -  y^S_v \leq \delta^S_v \quad \forall v \in V_0 \cap V_S, \forall S \in \S \\
        & y^S_v -  y_v \leq d^S_v \quad \forall v \in V_0 \cap V_S, \forall S \in \S \\
        & \delta^S \in \R^{V_0 \cap V_S}_{\geq0} \quad \forall S \in \S \\
        & d^S \in \R^{V_0 \cap V_S}_{\geq0} \quad \forall S \in \S
\end{split}
\end{equation}
Observe that if we change \(\delta^S_v + d^S_v\) in the objective to \(\delta^S_v\) or \(d^S_v\), then the objective we consider is \([y_v - y^S_v]^+\) or \([y^S_v - y_v]^+\), respectively.

Let \(V = V_0 \cup \bigcup_{S \in \S} V_S\). This LP has \(O(|V| |\S|)\) variables and \(O(|V|^2 |\S|)\) constraints. So it has size polynomial in the input size, which means we can solve \eqref{eq: 2SAG explicit} in polynomial-time, by solving \eqref{eq: 2SAG LP}. Using an auxiliary linear program we show that the feasible region of \eqref{eq: 2SAG LP} is an integral polyhedron.

\begin{theorem}
    \label{thm: integral solution}
    The feasible region of \eqref{eq: 2SAG LP} is an integral polyhedron.
\end{theorem}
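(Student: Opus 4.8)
The plan is to prove that the constraint matrix of \eqref{eq: 2SAG LP} is totally unimodular after a change of variables that converts the ``core'' constraints into precedence (difference) constraints, so that the whole system becomes the node--arc LP of a minimum-cut problem in an auxiliary graph. The first ingredient is a König/duality observation about the shape of a single core. Fix a maximum matching \(M\) in a bipartite \(G\) and let \(y \in \core(G)\). Summing \(y_u + y_v \geq 1\) over \(uv \in M\) gives \(\sum_{v \in V(M)} y_v \geq |M| = \nu(G)\), while \(\sum_{v \in V(M)} y_v \leq 1^\top y = \nu(G)\); hence every edge inequality on \(M\) is tight (\(y_u + y_v = 1\) for \(uv \in M\)) and \(y_w = 0\) for every \(M\)-exposed \(w\), so in particular \(0 \leq y \leq 1\).

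Next I would apply the substitution \(z_v = y_v\) for \(v\) in one colour class \(A\) of the bipartition and \(z_v = 1 - y_v\) for \(v\) in the other class \(B\). Under this unimodular affine map the matched equalities become \(z_u = z_v\), the remaining constraints \(y_u + y_v \geq 1\) become precedence constraints \(z_u \geq z_v\), the exposed-vertex conditions fix certain coordinates to \(0\) or \(1\), and the box \(0 \leq z \leq 1\) is retained. Thus \(\core(G)\) is, unimodularly, the order/closed-set polytope of a digraph, i.e.\ an integral cut polytope. Because the bipartition is preserved across stages, one checks that \(|y_v - y^S_v| = |z^0_v - z^S_v|\) for every \(v \in V_0 \cap V_S\) irrespective of its class, so the objective is unchanged by the substitution, and the linearising variables \(\delta^S_v, d^S_v\) still bound the same differences.

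The central step is then to recognise \eqref{eq: 2SAG LP}, written in the \(z\)-variables, as the LP relaxation of a minimum cut: introduce one node per \(z\)-coordinate of each stage together with two terminals \(s,t\); the precedence and coordinate-fixing constraints become (infinite-capacity, resp.\ terminal-incident) arcs, and each transition term \(\lambda_v |z^0_v - z^S_v|\) is realised by an arc of capacity \(\lambda_v\) between the stage-\(0\) and scenario-\(S\) copies of \(v\), with \(\delta^S_v\) and \(d^S_v\) playing the role of the associated arc variables. Since the disagreement cost \(|z^0_v - z^S_v|\) on two binary variables is submodular and the precedence/fixing arcs only restrict to a face, the resulting system is the node--arc formulation of a cut problem, whose constraint matrix is a network matrix and therefore totally unimodular; integrality then transfers back to the original \((y, \{y^S\}, \delta, d)\) coordinates through the unimodular substitution.

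I expect the main obstacle to be this global total-unimodularity verification, and it is genuinely necessary: the feasible region does \emph{not} split as the product \(\core(G_0) \times \prod_{S \in \S} \core(G_S)\), because the convexity of \([\cdot]^+\) means that a vertex of \eqref{eq: 2SAG LP} need not project to a vertex of each factor (averaging two core profiles can strictly decrease the recorded \(\delta,d\)). Hence a naive ``product of integral cores plus slacks'' argument fails, and one must check that the per-stage minimality constraints coming from König duality and the \(\ell_1\)-linearisation via \(\delta,d\) assemble into a \emph{single} network matrix, i.e.\ that the auxiliary graph simultaneously and faithfully encodes the closure structure of every core and all inter-stage transition costs.
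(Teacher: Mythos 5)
Your proposal is correct in outline and reaches the same structural endpoint as the paper --- a totally unimodular, min-cut-type system on an auxiliary digraph with two copies of every vertex in $V_0 \cap V_S$ and with $\delta^S_v, d^S_v$ acting as arc variables --- but it gets there by a genuinely different route. The paper never eliminates the equalities $1^\top y = \nu(G_0)$ and $1^\top y^S = \nu(G_S)$ combinatorially: it drops them, folds $1^\top y + \sum_{S} 1^\top y^S$ into a perturbed objective with an $\varepsilon$-scaled arbitrary linear part, and then needs the full chain \Cref{lem: integral polyhedron,lem: flow dual to 2SAG eps,lem: 2SAG eps to 2SAG} (explicit flow LP, its dual, a back-and-forth mapping with ``without loss of generality'' normalisations of the potentials $\gamma$, and an argument that optima of the relaxation are automatically minimum covers) to show that every objective admits an integral optimum. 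You instead rewrite each core exactly via a fixed maximum matching $M$ --- tightness on $M$-edges, zero on $M$-exposed vertices, which is an identity of polyhedra by your summation argument plus the trivial converse --- and the substitution $z_v = y_v$ on one colour class, $z_v = 1 - y_v$ on the other turns the whole feasible region of \eqref{eq: 2SAG LP} into difference constraints, fixings, a box, and rows $z_a - z_b - \delta \le 0$, i.e.\ the constraint matrix of the dual of a max-flow problem; your $z$ is exactly $1-\gamma$ for the paper's normalised potentials. This proves integrality of the feasible region directly, with no perturbation and no detour through arbitrary objectives, which is arguably cleaner. Two caveats: the total-unimodularity check you defer is precisely where the paper spends its effort, and to close it you must still exhibit the auxiliary digraph (terminal arcs for the box and fixing constraints, infinite-capacity arcs for the precedences, a $\delta$- and a $d$-arc between the two copies of each shared vertex) and note that your matrix is the transpose of its node--arc flow matrix --- the submodularity remark is true but is not what delivers TU; and the consistency of the bipartition across stages is what makes the sign of the substitution agree on both copies of each surviving vertex, which you correctly flag and which is exactly the hypothesis whose failure makes multistage vertex cover NP-hard in \Cref{sec:multistage}.
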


We will consider \eqref{eq: 2SAG LP} with an arbitrary objective, and show that for each objective we can find an integral optimal solution. Since for each extreme point of a polyhedron there is an objective such that the extreme point is the unique optimal solution, this proves that the polyhedron is integral. We consider the following objective, where \(\alpha\) can take any value, \(\beta \geq 0\) and \(b \geq 0\). Note that if \(\beta^S_v < 0\) or \(b^S_v < 0\) for any \(v \in V_0 \cap V_S\), \(S \in \S\), then the LP becomes unbounded, and so there are no extreme points in those directions.
\begin{equation*}
    \min \quad 
        \sum_{v \in V_0} \alpha_v y_v + \sum_{S \in \S} \sum_{v \in V_S} \alpha^S_v y^S_v+ \sum_{S \in \S} \sum_{v \in V_0 \cap V_S} \beta^S_v \delta^S_v + b^S_v d^S_v
\end{equation*}

We will formulate a maximum flow problem on an auxiliary graph, and compute its dual LP. It is known that the constraint matrix of a maximum flow LP is always totally unimodular (TU). The dual LP has the transpose as constraint matrix, which is then also TU. Since in addition the right hand side of the dual constraints are integral, this means the feasible region of the dual is an integral polyhedron. (See e.g.~\cite{Schrijver2003Combinatorial} for properties of TU matrices.) Finally, we show that we can map an optimal dual solution to an optimal solution for \eqref{eq: 2SAG LP}.

Let \(V_1, V_2 \subseteq V_0 \cup \bigcup_{S \in \S} V_S\) be the bipartition of \(G_0\) merged with all \(G_S\). Let
\begin{equation*}
    \varepsilon = \frac{1}{1 + \sum_{v \in V_0} |\alpha_v| + \sum_{S \in \S} \sum_{v \in V_S} |\alpha^S_v| + \sum_{S \in \S} \sum_{v \in V_0 \cap V_S} \beta^S_v + b^S_v}.
\end{equation*}
Note that \(\varepsilon > 0\).
We create the auxiliary graph \(G' = (V', A)\) as follows. Let
\begin{equation*}
    V' = \cbra{s, t} \cup V_0 \cup \cbra{v^S : v \in V_S, S \in \S}.
\end{equation*}
The arc set \(A\) contains the following arcs:
\begin{itemize}
    \item \(sv\) for all \(v \in V_0 \cap V_1\), with flow-capacity \(1 + \varepsilon \alpha_v\);
    \item \(sv^S\) for all \(v \in V_S \cap V_1\), \(S \in \S\), with flow-capacity \(1 + \varepsilon \alpha^S_v\);
    \item \(v^S v\) for all \(v \in (V_0 \cap V_S) \cap V_1\), \(S \in \S\), with flow-capacity \(\varepsilon \beta^S_v\);
    \item \(v v^S\) for all \(v \in (V_0 \cap V_S) \cap V_1\), \(S \in \S\), with flow-capacity \(\varepsilon b^S_v\);
    \item \(vt\) for all \(v \in V_0 \cap V_2\), with flow-capacity \(1 + \varepsilon \alpha_v\);
    \item \(v^S t\) for all \(v \in V_S \cap V_2\), \(S \in \S\), with flow-capacity \(1 + \varepsilon \alpha^S_v\);
    \item \(v v^S\) for all \(v \in (V_0 \cap V_S) \cap V_2\), \(S \in \S\), with flow-capacity \(\varepsilon \beta^S_v\);
    \item \(v^S v\) for all \(v \in (V_0 \cap V_S) \cap V_2\), \(S \in \S\), with flow-capacity \(\varepsilon b^S_v\);
    \item \(uv\) for all \(uv \in E_0\), such that \(u \in V_1\), \(v \in V_2\), without upperbound on the flow-capacity;
    \item \(u^S v^S\) for all \(uv \in E_S\), \(S \in \S\), such that \(u \in V_1\), \(v \in V_2\), without upperbound on the flow-capacity.
\end{itemize}
\Cref{fig:aux graph} shows an example of the auxiliary graph in general, and \Cref{fig:aux graph explicit} shows an example of the auxiliary graph of a specific instance.
\begin{figure}[ht]
    \centering
    \begin{subfigure}[t]{.4\textwidth}
        \centering
        \begin{tikzpicture}[
            scale=.6,
            circ/.style={
                circle,
                draw=black,
                inner sep=2pt
            },
        ]
            \node[circ,label={above:\(s\)}] (s) at (0,0) {};
            \node[circ,label={above:\(v^S\)}] (a) at (2,1) {};
            \node[circ,label={above:\(v\)}] (v) at (4,0) {};         
        
            \draw[->] (s) to (v);
            \draw[->] (s) to (a);
            \draw[->,bend left=20] (a) to (v);
            \draw[->,bend left=20] (v) to (a);
            \draw[->,dashed] (a) to (4,2);
            \draw[->,dashed] (v) to (6,0);
        \end{tikzpicture}
        \caption{Example of the source/\(V_1\) side.}
        \label{subfig: aux graph source}
    \end{subfigure}
    \hspace{.7cm}
    \begin{subfigure}[t]{.4\textwidth}
        \centering
        \begin{tikzpicture}[
            scale=.6,
            circ/.style={
                circle,
                draw=black,
                inner sep=2pt
            },
        ]
            \node[circ,label={above:\(v\)}] (v) at (0,0) {};
            \node[circ,label={above:\(v^S\)}] (a) at (2,1) {};
            \node[circ,label={above:\(t\)}] (t) at (4,0) {};         
        
            \draw[->] (v) to (t);
            \draw[->,bend left=20] (v) to (a);
            \draw[->,bend left=20] (a) to (v);
            \draw[->] (a) to (t);
            \draw[->,dashed] (-2,0) to (v);
            \draw[->,dashed] (0,2) to (a);
        \end{tikzpicture}
        \caption{Example of the sink/\(V_2\) side.}
        \label{subfig: aux graph sink}
    \end{subfigure}
    \caption{Example of part of the auxiliary graph, where dashed arcs indicate the edges corresponding with \(E_0\) and \(E_S\).}
    \label{fig:aux graph}
\end{figure}
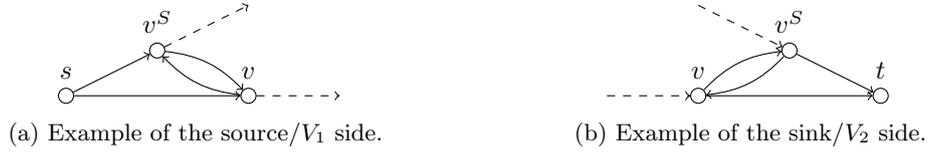
\begin{figure}[ht]
    \centering
    \begin{tikzpicture}[
        scale=.6,
        circ/.style={
            circle,
            draw=black,
            inner sep=2pt
        },
    ]
        \node[circ,label={left:\(s\)}] (s) at (0,0) {};
        \node[circ,label={above:\(a^{S_1}\)}] (a1) at (2,2) {};
        \node[circ,label={below:\(a^{S_2}\)}] (a2) at (2,-2) {};
        \node[circ,label={below:\(a\)}] (a) at (4,0) {};
        \node[circ,label={below:\(b\)}] (b) at (6,1) {};
        \node[circ,label={below:\(c\)}] (c) at (6,-1) {};
        \node[circ,label={above:\(b^{S_1}\)}] (b1) at (8,2) {};
        \node[circ,label={below:\(c^{S_2}\)}] (c1) at (8,-2) {};
        \node[circ,label={right:\(t\)}] (t) at (10,0) {};
    
        \draw[->] (s) to (a);
        \draw[->] (s) to (a1);
        \draw[->] (s) to (a2);
        \draw[->,bend left=20] (a1) to (a);
        \draw[->,bend left=20] (a) to (a1);
        \draw[->,bend left=20] (a2) to (a);
        \draw[->,bend left=20] (a) to (a2);
        \draw[->] (a) to (b);
        \draw[->] (a) to (c);
        \draw[->,bend left=20] (b) to (b1);
        \draw[->,bend left=20] (b1) to (b);
        \draw[->,bend left=20] (c) to (c1);
        \draw[->,bend left=20] (c1) to (c);
        \draw[->] (b) to (t);
        \draw[->] (c) to (t);
        \draw[->] (b1) to (t);
        \draw[->] (c1) to (t);
        \draw[->] (a1) to (b1);
        \draw[->] (a2) to (c1);
    \end{tikzpicture}
    \caption{Example of the auxiliary graph for the instance given by \(V_0 = \{a,b,c\}\), \(E_0 = \{ab, ac\}\), and \(\S=\{S_1, S_2\}\) where \(V_{S_1} = \{a, b\}\), \(E_{S_1} = \{ab\}\), and \(V_{S_2} = \{a, c\}\), \(E_{S_2} = \{ac\}\).}
    \label{fig:aux graph explicit}
\end{figure}
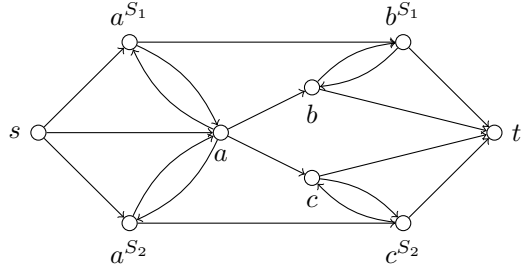

Let \(\1[...]\) be \(1\) if the statement in between the brackets is true, and \(0\) if the statement is false.
We can formulate the maximum flow problem in \(G'\) as an LP as follows.
\begin{equation}
\label{eq: flow LP}
\begin{split}
    \max \quad 
        & \sum_{v \in V_0 \cap V_1} f_{sv} + \sum_{S \in \S} \sum_{v \in V_S \cap V_1} f_{sv^S} \\
    \text{s.t.} \quad
        & f_{sv} + \sum_{S \in \S: v \in V_S} (f_{v^Sv} - f_{v v^S}) - \sum_{u: uv \in E_0} f_{vu} = 0 \quad \forall v \in V_0 \cap V_1 \\
        & f_{sv^S} + \1[v \in V_0](f_{v v^S} - f_{v^Sv}) - \sum_{u: uv \in E_S} f_{v^Su^S} = 0 \quad \forall v \in V_S \cap V_1, \forall S \in \S \\
        & \sum_{u: uv \in E_0} f_{uv} + \sum_{S \in \S: v \in V_S} (f_{v^S v} - f_{vv^S}) - f_{vt} = 0  \quad \forall v \in V_0 \cap V_2 \\
        & \sum_{u: uv \in E_S} f_{u^Sv^S} + \1[v \in V_0](f_{vv^S} - f_{v^S v}) - f_{v^St}= 0 \quad \forall v \in V_S \cap V_2, \forall S \in \S \\
        & f_{sv} \leq 1 + \varepsilon \alpha_v \quad \forall v \in V_0 \cap V_1 \\
        & f_{sv^S} \leq 1 + \varepsilon \alpha^S_v \quad \forall v \in V_S \cap V_1, \forall S \in \S \\
        & f_{vt} \leq 1 + \varepsilon \alpha_v \quad \forall v \in V_0 \cap V_2 \\
        & f_{v^St} \leq 1 + \varepsilon \alpha^S_v \quad \forall  v \in V_S \cap V_2, \forall S \in \S \\
        & f_{v^Sv} \leq \varepsilon \beta^S_v \quad \forall v \in (V_0 \cap V_S) \cap V_1, \forall S \in \S \\
        & f_{v v^S} \leq \varepsilon b^S_v \quad \forall v \in (V_0 \cap V_S) \cap V_1, \forall S \in \S \\
        & f_{vv^S} \leq \varepsilon \beta^S_v \quad \forall v \in (V_0 \cap V_S) \cap V_2, \forall S \in \S \\
        & f_{v^S v} \leq \varepsilon b^S_v \quad \forall v \in (V_0 \cap V_S) \cap V_2, \forall S \in \S \\
        & f \in \R^A_{\geq 0}
\end{split}
\end{equation}
For this LP to have a feasible solution, the flow-capacities need to be nonnegative: \(1 + \varepsilon \alpha_v \geq 0\) for all \(v \in V_0\), \(1 + \varepsilon \alpha^S_v \geq 0\) for all \(v \in V_S\), \(S \in \S\), \(\varepsilon \beta^S_v \geq 0\) for all \(v \in V_0 \cap V_S\), \(S \in \S\), and \(\varepsilon b^S_v \geq 0\) for all \(v \in V_0 \cap V_S\), \(S \in \S\). The latter two are satisfied as \(\varepsilon > 0\) and we set \(\beta, b \geq 0\). For any \(v \in V_0\), we have
\begin{equation*}
    \alpha_v 
    \geq - |\alpha_v|
    \geq - \left( 1 + \sum_{v \in V_0} |\alpha_v| + \sum_{S \in \S} \sum_{v \in V_S} |\alpha^S_v| + \sum_{S \in \S} \sum_{v \in V_0 \cap V_S} \beta^S_v + b^S_v \right)
    = -\frac{1}{\varepsilon},
\end{equation*}
and so, since \(\varepsilon > 0\), we have \(1 + \varepsilon \alpha_v \geq 1 + \varepsilon \frac{-1}{\varepsilon} = 0\). The same argument works for \(\alpha^S_v\) for any \(v \in V_S\), \(S \in \S\). 

The dual of this flow LP is as follows.
\begin{equation}
\label{eq: flow LP dual}
\begin{split}
    \min \quad 
        & \sum_{v \in V_0} y_v + \sum_{S \in \S} \sum_{v \in V_S} y^S_v + \varepsilon \left( \sum_{v \in V_0} \alpha_v y_v + \sum_{S \in \S} \sum_{v \in V_S} \alpha^S_v y^S_v + \sum_{S \in \S} \sum_{v \in V_0 \cap V_S} \beta^S_v \delta^S_v + b^S_v d^S_v \right)\\
    \text{s.t.} \quad
        & \gamma_v + y_v \geq 1 \quad \forall v \in V_0 \cap V_1 \\
        & \gamma^S_v + y^S_v \geq 1 \quad \forall v \in V_S \cap V_1, \forall S \in \S \\
        & \gamma_v - \gamma^S_v + \delta^S_v \geq 0 \quad \forall v \in (V_0 \cap V_S) \cap V_1, \forall S \in \S \\
        & \gamma^S_v - \gamma_v + d^S_v \geq 0 \quad \forall v \in (V_0 \cap V_S) \cap V_1, \forall S \in \S \\
        & \gamma_v - \gamma_u \geq 0  \quad \forall uv \in E_0 \text{ such that } u \in V_1, v \in V_2 \\
        & \gamma^S_v - \gamma^S_u \geq 0 \quad \forall uv \in E_S \text{ such that } u \in V_1, v \in V_2, \forall S \in \S \\
        & - \gamma_v + y_v \geq 0 \quad \forall v \in V_0 \cap V_2 \\
        & - \gamma^S_v + y^S_v \geq 0 \quad \forall v \in V_S \cap V_2, \forall S \in \S \\
        & \gamma^S_v - \gamma_v + \delta^S_v \geq 0 \quad \forall v \in (V_0 \cap V_S) \cap V_2, \forall S \in \S \\
        & \gamma_v - \gamma^S_v + d^S_v \geq 0 \quad \forall v \in (V_0 \cap V_S) \cap V_2, \forall S \in \S \\
        & y \in \R^{V_0}_{\geq 0} \\
        & y^S \in \R^{V_S}_{\geq 0} \quad \forall S \in \S \\
        & \delta^S \in \R^{V_0 \cap V_S}_{\geq 0} \quad \forall S \in \S \\
        & d^S \in \R^{V_0 \cap V_S}_{\geq 0} \quad \forall S \in \S \\
        & \gamma \in \R^{V_0} \\
        & \gamma^S \in \R^{V_S} \quad \forall S \in \S
\end{split}
\end{equation}

\begin{lemma}
    \label{lem: integral polyhedron}
    The feasible region of \eqref{eq: flow LP dual} is an integral polyhedron.
\end{lemma}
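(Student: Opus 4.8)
The plan is to follow the roadmap already sketched before the lemma: exhibit \eqref{eq: flow LP} as a genuine maximum-flow LP, invoke total unimodularity of its constraint matrix, transfer this property to the dual via transposition, and conclude with the Hoffman--Kruskal integrality criterion.

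First I would identify the constraint matrix \(M\) of \eqref{eq: flow LP}. Its rows split into two blocks: the flow-conservation equalities, one per node of \(V' \setminus \{s,t\}\), and the capacity inequalities, one per arc. The conservation block is exactly the node--arc incidence matrix of the directed graph \(G'\) restricted to the interior nodes; its entries lie in \(\{-1,0,+1\}\) with the incidence structure of a digraph, and such matrices are totally unimodular (see \cite{Schrijver2003Combinatorial}). The capacity block consists of rows that are unit vectors (each selects a single arc variable \(f_a\)), and appending rows with a single \(\pm 1\) entry preserves total unimodularity. Hence \(M\) is TU.

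Second, since the transpose of a TU matrix is TU, the matrix \(M^\top\) governing the inequalities of \eqref{eq: flow LP dual} is also TU. It then remains to check that the right-hand sides are integral. By LP duality each dual inequality corresponds to a primal arc variable \(f_a\), and its right-hand side equals the objective coefficient of \(f_a\) in \eqref{eq: flow LP}; these coefficients are \(1\) for the source arcs \(sv\) and \(sv^S\) and \(0\) for every other arc. Thus the dual right-hand side vector is \(0\)/\(1\), in particular integral. Finally I would invoke the Hoffman--Kruskal theorem: for a TU matrix with integral right-hand sides the resulting polyhedron is integral, in the version that also allows free variables and one-sided nonnegativity bounds. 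The sign constraints \(y, y^S, \delta^S, d^S \geq 0\) are themselves unit-vector rows, hence compatible with the TU structure, while \(\gamma, \gamma^S\) are free; applying the criterion to \eqref{eq: flow LP dual} yields integrality of its feasible region.

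The main obstacle is the bookkeeping in the second step: one must verify that \eqref{eq: flow LP dual} is exactly the dual of \eqref{eq: flow LP}, so that its inequality matrix is precisely \(M^\top\), and confirm that the right-hand sides genuinely come out integral. Everything else (total unimodularity of incidence matrices, closure of TU under transposition and under adjoining unit-vector rows, and the integrality criterion) is standard. A minor point requiring care is the treatment of the free potentials \(\gamma, \gamma^S\), which forces one to use the two-sided/unbounded version of the integrality theorem rather than the plain \(\{x : Ax \geq b,\ x \geq 0\}\) form.
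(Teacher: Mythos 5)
Your proposal is correct and follows exactly the paper's argument: total unimodularity of the max-flow constraint matrix, preservation of TU under transposition (and under adjoining unit-vector rows for the sign constraints), integrality of the dual right-hand sides, and the Hoffman--Kruskal criterion. The paper's own proof is just a terser version of the same three steps, so no substantive difference to report.
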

\begin{proof}
    The constraint matrix of a maximum flow LP is always TU; in particular the constraint matrix of \eqref{eq: flow LP} is TU.

    The transpose of a TU matrix is TU. The constraint matrix of a dual LP is the transpose of the constraint matrix of the primal LP. Together they imply that the constraint matrix of \eqref{eq: flow LP dual} is TU.

    Finally, a polyhedron with a TU constraint matrix and integral right hand side is an integral polyhedron; in particular, the feasible region of \eqref{eq: flow LP dual} is an integral polyhedron.
\end{proof}

We can obtain an integral optimal solution for \eqref{eq: flow LP dual} by solving \eqref{eq: flow LP dual} directly, or combinatorially, as follows. First we find an optimal flow in the auxiliary graph. The flow can then be used to obtain an optimal solution for \eqref{eq: flow LP dual}, by using complementary slackness. Finally, if this is not an integral solution, then in particular it is not an extreme point solution. So we can use this solution to go to an extreme point solution, which will be integral by \Cref{lem: integral polyhedron}.

We map an integral optimal solution for \eqref{eq: flow LP dual} to an integral optimal solution for \eqref{eq: 2SAG LP} in two steps. First we map it to the following LP.
\begin{equation}
\label{eq: 2SAG LP eps}
\begin{split}
\min \quad 
    & 1^\top y + \sum_{S \in \S} 1^\top y^S + \varepsilon \left( \sum_{v \in V_0} \alpha_v y_v + \sum_{S \in \S} \sum_{v \in V_S} \alpha^S_v y^S_v + \sum_{S \in \S} \sum_{v \in V_0 \cap V_S} \beta^S_v \delta^S_v + b^S_v d^S_v \right) \\
\text{s.t.} \quad
    & y_u + y_v \geq 1 \quad \forall uv \in E_0 \\
    & y \in \R^{V_0}_{\geq 0} \\
    & y^S_u + y^S_v \geq 1 \quad \forall uv \in E_S, \forall S \in \S \\
    & y^S \in \R^{V_S}_{\geq 0} \quad \forall S \in \S \\
    & y_v -  y^S_v \leq \delta^S_v \quad \forall v \in V_0 \cap V_S, \forall S \in \S \\
    & y^S_v -  y_v \leq d^S_v \quad \forall v \in V_0 \cap V_S, \forall S \in \S \\
    & \delta^S \in \R^{V_0 \cap V_S}_{\geq 0} \quad \forall S \in \S \\
    & d^S \in \R^{V_0 \cap V_S}_{\geq 0} \quad \forall S \in \S
\end{split}
\end{equation}

\begin{lemma}
    \label{lem: flow dual to 2SAG eps}
    We can obtain an optimal solution for \eqref{eq: 2SAG LP eps} from an optimal solution for \eqref{eq: flow LP dual}.
\end{lemma}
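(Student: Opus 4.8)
The two programs \eqref{eq: flow LP dual} and \eqref{eq: 2SAG LP eps} have \emph{identical} objective functions once one reads \(1^\top y = \sum_{v\in V_0} y_v\); the only difference is that \eqref{eq: flow LP dual} carries the extra potential variables \(\gamma,\gamma^S\). The plan is therefore to prove two facts: (i) every feasible point of \eqref{eq: 2SAG LP eps} lifts to a feasible point of \eqref{eq: flow LP dual} with the \emph{same} \((y,y^S,\delta,d)\), hence the same objective value; and (ii) the restriction of any optimal solution of \eqref{eq: flow LP dual} to the coordinates \((y,y^S,\delta,d)\) is feasible for \eqref{eq: 2SAG LP eps}. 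Together these show that the two programs have the same optimal value and that the restriction in (ii) is the desired optimal solution of \eqref{eq: 2SAG LP eps}.

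For (i) I would, given feasible \((y,y^S,\delta,d)\), define the potentials by \(\gamma_v = 1-y_v\) for \(v\in V_1\) and \(\gamma_v = y_v\) for \(v\in V_2\), and analogously \(\gamma^S_v = 1-y^S_v\) for \(v\in V_S\cap V_1\) and \(\gamma^S_v = y^S_v\) for \(v\in V_S\cap V_2\); this is well defined since the bipartition \(V_1,V_2\) is common to \(G_0\) and all \(G_S\). A direct substitution shows that each constraint of \eqref{eq: flow LP dual} collapses either to a trivially true inequality or to a constraint of \eqref{eq: 2SAG LP eps}: the source/sink constraints become \(1\ge 1\) and \(0\ge 0\); each edge constraint \(\gamma_v-\gamma_u\ge 0\) becomes exactly \(y_u+y_v\ge 1\); and the four cross constraints linking \(\gamma\) and \(\gamma^S\) become precisely \(\delta^S_v\ge y_v-y^S_v\) and \(d^S_v\ge y^S_v-y_v\). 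This establishes that the optimal value of \eqref{eq: flow LP dual} is at most that of \eqref{eq: 2SAG LP eps}.

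For (ii) the key point is that \(\varepsilon\) is chosen so that every objective coefficient \(1+\varepsilon\alpha_v\) and \(1+\varepsilon\alpha^S_v\) is \emph{strictly} positive: indeed \(1/\varepsilon \ge 1+|\alpha_v| > -\alpha_v\), so \(1+\varepsilon\alpha_v>0\). Moreover, for a fixed \(v\in V_1\) the variable \(y_v\) occurs in only one constraint of \eqref{eq: flow LP dual}, namely \(\gamma_v+y_v\ge 1\) (besides \(y_v\ge 0\)); hence at any optimal solution it is driven to its minimum feasible value \(y_v=\max\cbra{0,\,1-\gamma_v}\), and similarly \(y_v=\max\cbra{0,\,\gamma_v}\) for \(v\in V_2\), with the analogous statements for \(y^S\). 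Feeding these minimal forms into the dual inequalities recovers the 2SAG constraints: summing a source, an edge and a sink constraint gives \(y_u+y_v\ge 1\); and combining the cross constraint \(\gamma_v-\gamma^S_v+\delta^S_v\ge 0\) (or its \(V_2\) analogue) with \(\gamma_v=1-y_v\) and the inequality \(\gamma^S_v+y^S_v\ge 1\) yields \(\delta^S_v\ge y_v-y^S_v\), and symmetrically \(d^S_v\ge y^S_v-y_v\). In the boundary case where the relevant \(y\)-variable equals \(0\), the 2SAG inequality holds trivially, since its right-hand side is nonnegative while its left-hand side is at most \(0\).

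Combining the two parts finishes the proof: the restriction from (ii) is feasible for \eqref{eq: 2SAG LP eps} with objective equal to the optimum of \eqref{eq: flow LP dual}, while (i) guarantees no feasible point of \eqref{eq: 2SAG LP eps} can do better, so the restriction is optimal. The hard part is step (ii): unlike the edge constraints, the \(\delta^S_v\)- and \(d^S_v\)-constraints are \emph{not} implied by feasibility of \eqref{eq: flow LP dual} alone, so the argument must genuinely invoke optimality through the strict positivity of the objective coefficients, which is exactly what the particular definition of \(\varepsilon\) provides. Finally, since an optimal solution of \eqref{eq: flow LP dual} can be taken integral by \Cref{lem: integral polyhedron} and the minimal forms \(\max\cbra{0,1-\gamma_v}\) and \(\max\cbra{0,\gamma_v}\) preserve integrality, the resulting solution of \eqref{eq: 2SAG LP eps} is integral as well.
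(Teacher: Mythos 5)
Your proposal is correct, and its overall architecture matches the paper's: both directions are established via the same lifting map \(\gamma_v = 1-y_v\) on the \(V_1\) side and \(\gamma_v = y_v\) on the \(V_2\) side, and the edge constraints \(y_u+y_v\ge 1\) are recovered by summing a source, an edge and a sink inequality. Where you genuinely diverge is in the crucial step of the restriction direction, namely establishing \(\delta^S_v \ge y_v - y^S_v\) and \(d^S_v \ge y^S_v - y_v\). The paper proves a ``without loss of generality'' statement: it explicitly modifies an arbitrary feasible dual solution to force \(\hat\gamma_v + \hat y_v = 1\) and \(\hat\gamma^S_v + \hat y^S_v = 1\) (resp.\ the \(V_2\) analogues) everywhere, which requires a somewhat delicate cascading adjustment of the \(\gamma\) and \(\gamma^S\) variables in the case \(\hat y_v = 0\), \(\hat\gamma_v > 1\). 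You instead invoke optimality directly: since each \(y_v\) has strictly positive objective coefficient \(1+\varepsilon\alpha_v\) and appears in a single constraint besides nonnegativity, it sits at \(\max\{0,1-\gamma_v\}\) (resp.\ \(\max\{0,\gamma_v\}\)) at any optimum; when it is positive you get the needed tightness of \emph{one} of the two source constraints and combine it with mere \emph{feasibility} of the other, and when it is zero the target inequality is vacuous because its left-hand side is nonpositive. This short-circuits the paper's \(\gamma\)-perturbation argument entirely and is, in my view, cleaner; the paper's version has the mild advantage of producing a canonical ``normalized'' dual solution, but for the purpose of this lemma your route suffices and loses nothing. Your closing remark on integrality is not needed for the statement as given (integrality is handled by combining with \Cref{lem: integral polyhedron} and \Cref{lem: 2SAG eps to 2SAG}), but it is harmless.
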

\begin{proof}
    \begin{claim}
        Any feasible solution for \eqref{eq: 2SAG LP eps} can be mapped to a feasible solution for \eqref{eq: flow LP dual} with the same objective value.
    \end{claim}
    \begin{proof}
        Let \(\hat{\mu} = (\hat{y}, \hat{y}^S \text{ for } S \in S, \hat{\delta}^S \text{ for } S \in S, \hat{d}^S \text{ for } S \in S)\) be a feasible solution for \eqref{eq: 2SAG LP eps}. We extend it to a solution \(\hat{\mu}_{\text{ext}}\) for \eqref{eq: flow LP dual} by setting \(\hat{\gamma}_v = 1 - \hat{y}_v\) for all \(v \in V_0 \cap V_1\), \(\hat{\gamma}_v = \hat{y}_v\) for all \(v \in V_0 \cap V_2\), \(\hat{\gamma}^S_v = 1 - \hat{y}^S_v\) for all \(v \in V_S \cap V_1\), \(S \in \S\), and \(\hat{\gamma}^S_v = \hat{y}^S_v\) for all \(v \in V_S \cap V_2\), \(S \in \S\). 
        It is clear that \(\hat{\mu}_{\text{ext}}\) has the same objective value as \(\hat{\mu}\), as the objective functions of the two linear programs are the same and do not involve the \(\gamma\) variables. We next show that \(\hat{\mu}_{\text{ext}}\) is feasible.

        The constraints \(\gamma_v + y_v \geq 1\), \(\gamma^S_v + y^S_v \geq 1\), \(- \gamma_v + y_v \geq 0\) and \(- \gamma^S_v + y^S_v \geq 0\) of \eqref{eq: flow LP dual} are satisfied by \(\hat{\mu}_{\text{ext}}\) by definition. Nonnegativity of \(y\), \(y^S\), \(\delta^S\) and \(d^S\) for all \(S \in \S\) are satisfied by \(\hat{\mu}_{\text{ext}}\), because \(\hat{\mu}\) is feasible for \eqref{eq: 2SAG LP eps}.

        Let \(S \in \S\) and \(v \in (V_0 \cap V_S) \cap V_1\). We have
        \begin{equation*}
            \hat{\gamma}_v - \hat{\gamma}^S_v + \hat{\delta}^S_v
            = 1 - \hat{y}_v - (1 - \hat{y}^S_v) + \hat{\delta}^S_v
            = \hat{y}^S_v - \hat{y}_v + \hat{\delta}^S_v
            \geq 0,
        \end{equation*}
        and
        \begin{equation*}
            \hat{\gamma}^S_v - \hat{\gamma}_v + \hat{d}^S_v
            = 1 - \hat{y}^S_v - (1 - \hat{y}_v) + \hat{d}^S_v
            = \hat{y}_v - \hat{y}^S_v + \hat{d}^S_v
            \geq 0,
        \end{equation*}
        where both times the last inequality follows from \(\hat{\mu}\)'s feasibility for \eqref{eq: 2SAG LP eps}. Similarly, we have for \(S \in \S\) and \(v \in (V_0 \cap V_S) \cap V_2\)
        \begin{equation*}
            \hat{\gamma}^S_v - \hat{\gamma}_v + \hat{\delta}^S_v
            = \hat{y}^S_v - \hat{y}_v + \hat{\delta}^S_v
            \geq 0,
        \end{equation*}
        and
        \begin{equation*}
            \hat{\gamma}_v - \hat{\gamma}^S_v + \hat{d}^S_v
            = \hat{y}_v - \hat{y}^S_v + \hat{d}^S_v
            \geq 0.
        \end{equation*}
        Let \(uv \in E_0\) such that \(u \in V_1\) and \(v \in V_2\). We have
        \begin{equation*}
            \hat{\gamma}_v - \hat{\gamma}_u
            = \hat{y}_v - (1 - \hat{y}_u)
            = \hat{y}_v + \hat{y}_u - 1
            \geq 0.
        \end{equation*}
        Finally, let \(S \in \S\) and \(uv \in E_S\) such that \(u \in V_1\) and \(v \in V_2\). We have
        \begin{equation*}
            \hat{\gamma}^S_v - \hat{\gamma}^S_u
            = \hat{y}^S_v - (1 - \hat{y}^S_u)
            = \hat{y}^S_v + \hat{y}^S_u - 1
            \geq 0.
        \end{equation*}
    \end{proof}

    \begin{claim}
        Any feasible solution to \eqref{eq: flow LP dual} can be mapped to a feasible solution to \eqref{eq: 2SAG LP eps} with the same objective value.
    \end{claim}
    \begin{proof}
        Let \(\hat{\mu} = (\hat{y}, \hat{y}^S \text{ for } S \in S, \hat{\delta}^S \text{ for } S \in S, \hat{d}^S \text{ for } S \in S, \hat{\gamma}, \hat{\gamma}^S \text{ for } S \in S)\) be a feasible solution for \eqref{eq: flow LP dual}. We restrict it to a solution \(\hat{\mu}_{\text{res}}\) for \eqref{eq: 2SAG LP eps} by disregarding the \(\gamma\) variables. It is clear that \(\hat{\mu}_{\text{res}}\) has the same objective value as \(\hat{\mu}\), as the objective functions of the two linear programs are the same and do not involve the \(\gamma\) variables. We next show that \(\hat{\mu}_{\text{res}}\) is feasible.

        First observe that \(\hat{\mu}_{\text{res}} \geq 0\).

        Let \(uv \in E_0\) such that \(u \in V_1\) and \(v \in V_2\). We have
        \begin{equation*}
            \hat{y}_u + \hat{y}_v
            = (\hat{\gamma}_u + \hat{y}_u) + (\hat{\gamma}_v - \hat{\gamma}_u) + (- \hat{\gamma}_v + \hat{y}_v)
            \geq 1 + 0 + 0 = 1.
        \end{equation*}
        Similarly, let \(S \in \S\) and \(uv \in E_S\) such that \(u \in V_1\) and \(v \in V_2\). We have
        \begin{equation*}
            \hat{y}^S_u + \hat{y}^S_v
            = (\hat{\gamma}^S_u + \hat{y}^S_u) + (\hat{\gamma}^S_v - \hat{\gamma}^S_u) + (- \hat{\gamma}^S_v + \hat{y}^S_v)
            \geq 1 + 0 + 0 = 1.
        \end{equation*}

        Let \(S \in \S\), \(v \in (V_0 \cap V_S) \cap V_1\). We will show that without loss of generality we can assume that \(\hat{\gamma}_v + \hat{y}_v = 1\) and \(\hat{\gamma}^S_v + \hat{y}^S_v = 1\). Consequently, we have
        \begin{equation*}
        \begin{split}
            \hat{\delta}^S_v + \hat{y}^S_v - \hat{y}_v
            &= \hat{\delta}^S_v + \hat{y}^S_v - \hat{y}_v + \hat{\gamma}^S_v - \hat{\gamma}^S_v + \hat{\gamma}_v - \hat{\gamma}_v \\
            &= (\hat{\delta}^S_v + \hat{\gamma}_v - \hat{\gamma}^S_v) + (\hat{\gamma}^S_v + \hat{y}^S_v) - (\hat{\gamma}_v + \hat{y}_v) \\
            & \geq 0 + 1 - 1 = 0,
        \end{split}
        \end{equation*}
        and
        \begin{equation*}
        \begin{split}
            \hat{d}^S_v + \hat{y}_v - \hat{y}^S_v
            &= \hat{d}^S_v + \hat{y}_v - \hat{y}^S_v + \hat{\gamma}_v - \hat{\gamma}_v + \hat{\gamma}^S_v - \hat{\gamma}^S_v \\
            &= (\hat{d}^S_v + \hat{\gamma}^S_v - \hat{\gamma}_v) + (\hat{\gamma}_v + \hat{y}_v) - (\hat{\gamma}^S_v + \hat{y}^S_v) \\
            & \geq 0 + 1 - 1 = 0.
        \end{split}
        \end{equation*}
        Now, let \(S \in \S\), \(v \in (V_0 \cap V_S) \cap V_2\). We will also show that without loss of generality we can assume that \(- \hat{\gamma}_v + \hat{y}_v = 0\) and \(- \hat{\gamma}^S_v + \hat{y}^S_v = 0\). Consequently, we have
        \begin{equation*}
        \begin{split}
            \hat{\delta}^S_v + \hat{y}^S_v - \hat{y}_v
            &= \hat{\delta}^S_v + \hat{y}^S_v - \hat{y}_v + \hat{\gamma}^S_v - \hat{\gamma}^S_v + \hat{\gamma}_v - \hat{\gamma}_v \\
            &= (\hat{\delta}^S_v + \hat{\gamma}^S_v - \hat{\gamma}_v) + (- \hat{\gamma}^S_v + \hat{y}^S_v) - (- \hat{\gamma}_v + \hat{y}_v) \\
            & \geq 0 + 0 - 0 = 0,
        \end{split}
        \end{equation*}
        and
        \begin{equation*}
        \begin{split}
            \hat{d}^S_v + \hat{y}_v - \hat{y}^S_v
            &= \hat{d}^S_v + \hat{y}_v - \hat{y}^S_v + \hat{\gamma}_v - \hat{\gamma}_v + \hat{\gamma}^S_v - \hat{\gamma}^S_v \\
            &= (\hat{d}^S_v + \hat{\gamma}_v - \hat{\gamma}^S_v) + (- \hat{\gamma}_v + \hat{y}_v) - (- \hat{\gamma}^S_v + \hat{y}^S_v) \\
            & \geq 0 + 0 - 0 = 0,
        \end{split}
        \end{equation*}
        This finishes the feasibility proof of \(\hat{\mu}_{\text{res}}\).

        To show that we can indeed assume without loss of generality that \(\hat{\gamma}_v + \hat{y}_v = 1\) for \(v \in V_0 \cap V_1\), suppose it does not hold, so: \(\hat{\gamma}_v + \hat{y}_v > 1\). If \(\hat{y}_v > 0\), then lower \(\hat{y}_v\) by \(\min\{\hat{y}_v, 1 - \hat{\gamma}_v\}\) (this does not affect feasibility as \(\hat{y}_v\) is contained in only this one constraint, and it improves the objective). If now \(\hat{\gamma}_v + \hat{y}_v = 1\), then we are done. If not, then it must be that \(\hat{y}_v = 0\) and \(\hat{\gamma}_v > 1\). Let \(0 < \eta \leq \hat{\gamma}_v - 1\) and lower \(\hat{\gamma}_v\) by \(\eta\): \(\hat{\gamma}'_v = \hat{\gamma}_v - \eta\) (this does not change the objective). By definition of \(\eta\), we still have \(\hat{\gamma}'_v + \hat{y}_v \geq 1\). Let \(u \in V_2\) such that \(uv \in E_0\), then we have 
        \begin{equation*}
            \hat{\gamma}_u - \hat{\gamma}'_v 
            = \hat{\gamma}_u - (\hat{\gamma}_v - \eta)
            = \hat{\gamma}_u - \hat{\gamma}_v + \eta
            \geq \eta 
            > 0.
        \end{equation*}
        If \(v \notin V_S\) for all \(S \in \S\), then the solution with \(\hat{\gamma}_v\) replaced by \(\hat{\gamma}'_v\) is feasible. If there is at least one \(S \in \S\) such that \(v \in V_S\), then we have
        \begin{equation*}
            \hat{\gamma}^S_v - \hat{\gamma}'_v + \hat{d}^S_v
            = \hat{\gamma}^S_v - (\hat{\gamma}_v - \eta) + \hat{d}^S_v
            \geq \eta
            > 0,
        \end{equation*}
        for all \(S \in \S\) such that \(v \in V_S\).
        If also \(\hat{\gamma}'_v - \hat{\gamma}^S_v + \hat{\delta}^S_v \geq 0\) for some choice of \(\eta\), then the solution with \(\hat{\gamma}_v\) replaced by \(\hat{\gamma}'_v\) is again feasible.

        So now suppose that there is some \(S \in \S\) with \(v \in V_S\), such that for all choices of \(\eta\), this latter constraint is not satisfied. Then it must be that \(\hat{\gamma}_v - \hat{\gamma}^S_v + \hat{\delta}^S_v = 0\). We can make this constraint work if we also decrease \(\hat{\gamma}^S_v\) by \(\eta\): \((\hat{\gamma}^S_v)' = \hat{\gamma}^S_v - \eta\) (this does not change the objective). Like before, it is clear that the constraint \(\gamma^S_u - \gamma^S_v \geq 0\) is still satisfied, as we increase the left hand side.
        The constraint \(\gamma^S_v - \gamma_v + d^S_v \geq 0\) is also still satisfied, as we decrease \(\hat{\gamma}^S_v\) and \(\hat{\gamma}_v\) by the same amount.
        Finally, using \(\hat{\gamma}_v - \hat{\gamma}^S_v + \hat{\delta}^S_v = 0\) and \(\hat{\gamma}_v - \eta \geq 1\), we have
        \begin{equation*}
            (\hat{\gamma}^S_v)' + \hat{y}^S_v
            = \hat{\gamma}^S_v - \eta + \hat{y}^S_v
            = \hat{\gamma}_v + \hat{\delta}^S_v - \eta + \hat{y}^S_v
            \geq 1 + 0 + 0
            = 1.
        \end{equation*}
        So again, we find that the solution with \(\hat{\gamma}_v\) replaced by \(\hat{\gamma}'_v\) is feasible. Consequently, we can decrease the value of \(\hat{\gamma}_v\). By possibly repeating this argument, we can decrease \(\hat{\gamma}_v\) to \(1\), so that \(\hat{\gamma}_v + \hat{y}_v = 1\).

        By similar arguments we can show that the other ``without loss of generality''-assumptions hold as well.
    \end{proof}

    Now we can map an optimal solution for \eqref{eq: flow LP dual} to a solution for \eqref{eq: 2SAG LP eps} with the same objective value, as described above. This solution is optimal for \eqref{eq: 2SAG LP eps}, as otherwise we could find a better solution, map the better solution back to a solution for \eqref{eq: flow LP dual} with the same objective value, contradicting the optimality of the starting solution.
\end{proof}

\begin{lemma}
    \label{lem: 2SAG eps to 2SAG}
    An integral optimal solution for \eqref{eq: 2SAG LP eps} is also an integral optimal solution for \eqref{eq: 2SAG LP}.
\end{lemma}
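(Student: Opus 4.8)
The plan is to use that \eqref{eq: 2SAG LP eps} is exactly \eqref{eq: 2SAG LP} with the two normalization constraints \(1^\top y = \nu(G_0)\) and \(1^\top y^S = \nu(G_S)\) (\(S \in \S\)) dropped, and with the objective \(c := \sum_{v \in V_0} \alpha_v y_v + \sum_{S \in \S} \sum_{v \in V_S} \alpha^S_v y^S_v + \sum_{S \in \S} \sum_{v \in V_0 \cap V_S} (\beta^S_v \delta^S_v + b^S_v d^S_v)\) replaced by \(g := 1^\top y + \sum_{S \in \S} 1^\top y^S + \varepsilon c\). Writing \(F\) and \(F_{\mathrm{eps}}\) for the two feasible regions, we have \(F \subseteq F_{\mathrm{eps}}\). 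I would first record the lower bound that for every point of \(F_{\mathrm{eps}}\) the vector \(y\) is a fractional vertex cover of \(G_0\), so \(1^\top y \ge \tau_f(G_0) = \nu(G_0)\), and similarly \(1^\top y^S \ge \nu(G_S)\); hence \(1^\top y + \sum_{S} 1^\top y^S \ge N := \nu(G_0) + \sum_{S} \nu(G_S)\), with equality precisely on \(F\).

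The crux is to show that the given integral optimal solution \(\mu^*\) of \eqref{eq: 2SAG LP eps} lies in \(F\), i.e.\ attains these bounds with equality; this is where the smallness of \(\varepsilon\) enters. I would argue in two steps. First, every coordinate of \(y^*\) and of each \(y^{*S}\) is at most \(1\): if some coordinate equalled \(M \ge 2\), then simultaneously lowering by \(1\) every coordinate (across \(y^*\) and all \(y^{*S}\)) that equals \(M\) keeps all vertex-cover constraints satisfied and does not increase the coupled \(\delta\)- and \(d\)-terms (which are kept at their minimal feasible values), while each lowered coordinate changes \(g\) by \((1 + \varepsilon \alpha)\cdot(-1) \le -\varepsilon < 0\). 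Here I use that the definition of \(\varepsilon\) forces \(1 + \varepsilon\alpha_v \ge \varepsilon > 0\) (and likewise for \(\alpha^S_v\)), so this strictly improves \(g\), contradicting optimality; hence \(y^*, y^{*S} \in \{0,1\}\). Second, suppose an equality failed, say \(1^\top y^* > \nu(G_0)\); as \(y^*\) is then a \(0/1\) vertex cover this forces \(1^\top y^* \ge \nu(G_0) + 1\). Replacing \(y^*\) by a core element \(\bar y\) of \(G_0\) (which exists, with \(\bar y \le 1\) and \(1^\top \bar y = \nu(G_0)\)), resetting only the coupled variables \(\delta^S_v, d^S_v\) with \(v \in V_0 \cap V_S\) to their minimal feasible values, and keeping every \(y^{*S}\) fixed, produces \(\bar\mu \in F_{\mathrm{eps}}\) with \(g(\bar\mu) - g(\mu^*) \le -1 + \varepsilon(1/\varepsilon - 1) = -\varepsilon < 0\): the drop in \(1^\top y\) is at least \(1\), while the change in \(c\) involves only \(V_0\)-coordinates and their couplings, each moving by at most \(1\) and weighted by the \(|\alpha_v|, \beta^S_v, b^S_v\) whose total is at most \(1/\varepsilon - 1\). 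This again contradicts optimality, and the cases \(1^\top y^{*S} > \nu(G_S)\) are identical, so \(\mu^* \in F\).

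Finally I would conclude. On \(F\) the term \(1^\top y + \sum_S 1^\top y^S\) is the constant \(N\), so \(g = N + \varepsilon c\) there; since \(\varepsilon > 0\), a point of \(F\) minimizes \(g\) over \(F\) iff it minimizes \(c\) over \(F\). Because \(\mu^* \in F\) minimizes \(g\) over the larger set \(F_{\mathrm{eps}} \supseteq F\), it minimizes \(g\), hence \(c\), over \(F\); that is, \(\mu^*\) is an optimal solution of \eqref{eq: 2SAG LP} for the objective \(c\), and integrality is inherited verbatim since \(\mu^*\) is literally the same point. I expect the main obstacle to be the second step above: verifying that the \(\varepsilon\)-perturbation can never recoup a full integer unit of \(1^\top y + \sum_S 1^\top y^S\). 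The delicate point is to compare \(\mu^*\) against a neighbour that differs only in the single graph whose normalization is violated (and its couplings), since changing all covers at once would inflate the perturbation bound to \(2(1/\varepsilon - 1)\) and break the argument.
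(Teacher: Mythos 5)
Your proposal is correct and follows essentially the same route as the paper's proof: show that the integral optimum of the $\varepsilon$-perturbed LP must satisfy $1^\top y = \nu(G_0)$ and $1^\top y^S = \nu(G_S)$ by swapping in a minimum vertex cover for the violating stage and checking that the $\varepsilon$-term cannot recoup the unit saved in the linear part, then conclude optimality for \eqref{eq: 2SAG LP} because the added term $1^\top y + \sum_{S} 1^\top y^S$ is constant on its feasible region. Your preliminary step establishing $y^*, y^{*S} \le 1$ and your explicit bound on the \emph{change} of the perturbation term are somewhat more careful than the paper's write-up (which asserts $\hat{y}^S \le 1$ without proof and only bounds the new perturbation term), but the argument is the same.
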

\begin{proof}
    Let \(\hat{\gamma} = (\hat{y}, \hat{y}^S \text{ for } S \in S, \hat{\delta}^S \text{ for } S \in S, \hat{d}^S \text{ for } S \in S)\) be an integral optimal solution for \eqref{eq: 2SAG LP eps}.
    
    Suppose that \(1^\top \hat{y} > \nu(G_0)\), then because \(\hat{y}\) is integral, \(1^\top \hat{y} \geq \nu(G_0) + 1\). Now replace \(\hat{y}\) by a minimum vertex cover \(\widetilde{y}\), i.e., \(1^\top \widetilde{y} = \nu(G_0)\), and set \(\widetilde{\delta}^{S}\) and \(\widetilde{d}^S\) accordingly for all \(S \in \S\).
    We will have \(\widetilde{y} \leq 1\) and \(\hat{y}^S \leq 1\), and hence also \(\widetilde{\delta}^S \leq 1\) and \(\widetilde{d}^S \leq 1\). Therefore,
    \begin{equation*}
    \begin{split}
        \varepsilon \left( \sum_{v \in V_0} \alpha_v \widetilde{y}_v + \sum_{S \in \S} \sum_{v \in V_S} \alpha^S_v \hat{y}^S_v \right.&\left. + \sum_{S \in \S} \sum_{v \in V_0 \cap V_S} \beta^S_v \widetilde{\delta}^S_v + b^S_v \widetilde{d}^S_v \right)\\
        &\leq \varepsilon \left( \sum_{v \in V_0} \alpha_v + \sum_{S \in \S} \sum_{v \in V_S} \alpha^S_v + \sum_{S \in \S} \sum_{v \in V_0 \cap V_S} \beta^S_v + b^S_v  \right)\\
        &\leq \varepsilon \left( \sum_{v \in V_0} |\alpha_v| + \sum_{S \in \S} \sum_{v \in V_S} |\alpha^S_v| + \sum_{S \in \S} \sum_{v \in V_0 \cap V_S} \beta^S_v + b^S_v \right)\\
        &< \varepsilon \left( 1+ \sum_{v \in V_0} |\alpha_v| + \sum_{S \in \S} \sum_{v \in V_S} |\alpha^S_v| + \sum_{S \in \S} \sum_{v \in V_0 \cap V_S} \beta^S_v + b^S_v \right)\\
        &= 1,
    \end{split}
    \end{equation*}
    which means we obtain a strictly better solution, contradicting that \(\hat{\gamma}\) is optimal. So, \(1^\top \hat{y} = \nu(G_0)\). Similarly, \(1^\top \hat{y}^S = \nu(G_S)\) for all \(S \in \S\). Hence, \(\hat{\gamma}\) is feasible for \eqref{eq: 2SAG LP}.

    Suppose \(\hat{\gamma}\) is not optimal for \eqref{eq: 2SAG LP}. Let \(\widetilde{\gamma}\) be an optimal solution for \eqref{eq: 2SAG LP}. Then
    \begin{multline*}
        \sum_{v \in V_0} \alpha_v \widetilde{y}_v + \sum_{S \in \S} \sum_{v \in V_S} \alpha^S_v \widetilde{y}^S_v+ \sum_{S \in \S} \sum_{v \in V_0 \cap V_S} \beta^S_v \widetilde{\delta}^S_v + b^S_v \widetilde{d}^S_v\\
        < \sum_{v \in V_0} \alpha_v \hat{y}_v + \sum_{S \in \S} \sum_{v \in V_S} \alpha^S_v \hat{y}^S_v+ \sum_{S \in \S} \sum_{v \in V_0 \cap v_S} \beta^S_v \hat{\delta}^S_v + b^S_v \hat{d}^S_v,
    \end{multline*}
    \(1^\top \widetilde{y} = \nu(G_0) = 1^\top \hat{y}\), and \(1^\top \widetilde{y}^S = \nu(G_S) = 1^\top \hat{y}^S\) for all \(S \in \S\). It follows that
    \begin{multline*}
        1^\top \widetilde{y} + \sum_{S \in \S} 1^\top \widetilde{y}^S + \varepsilon \left( \sum_{v \in V_0} \alpha_v \widetilde{y}_v + \sum_{S \in \S} \sum_{v \in V_S} \alpha^S_v \widetilde{y}^S_v + \sum_{S \in \S} \sum_{v \in V_0 \cap V_S} \beta^S_v \widetilde{\delta}^S_v + b^S_v \widetilde{d}^S_v \right) \\
        < 1^\top \hat{y} + \sum_{S \in \S} 1^\top \hat{y}^S + \varepsilon \left( \sum_{v \in V_0} \alpha_v \hat{y}_v + \sum_{S \in \S} \sum_{v \in V_S} \alpha^S_v \hat{y}^S_v + \sum_{S \in \S} \sum_{v \in V_0 \cap V_S} \beta^S_v \hat{\delta}^S_v + b^S_v \hat{d}^S_v \right),
    \end{multline*}
    because \(\varepsilon > 0\). This contradicts the optimality of \(\hat{\gamma}\) for \eqref{eq: 2SAG LP eps}, hence \(\hat{\gamma}\) must be optimal for \eqref{eq: 2SAG LP}.
\end{proof}

Finally, \Cref{lem: integral polyhedron,lem: flow dual to 2SAG eps,lem: 2SAG eps to 2SAG} prove \Cref{thm: integral solution}.
\section{Implicit Distribution}
\label{sec:implicit}

We here prove that, when the distribution is not known, the problem becomes hard but it can still be well approximated using the SAA method. For the SAA analysis, the integrality result proved in the previous section plays a central role. In terms of techniques, the results in this section follow closely the ones in~\cite{Faenza2024Two-stage}, we still include all the details for the sake of completeness.

\subsection{Hardness}
\begin{theorem}
    When the second-stage distribution is specified implicitly by a sampling oracle, there exists no algorithm that solves \eqref{eq: 2SAG} in time polynomial in the input size and the number of calls to the oracle, unless \(\text{P} = \text{NP}\). This holds even if \(\lambda\) is nonzero for only one vertex \(v \in V_0\), and if all second-stage scenarios are obtained by only removing vertices.
\end{theorem}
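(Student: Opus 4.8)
The plan is to reduce from an \textsf{NP}-complete problem, exactly as in \cite{Faenza2024Two-stage}, and to exploit the two restrictions in the statement (a single vertex carries positive \(\lambda\), and scenarios only delete vertices) to collapse the optimal value of \eqref{eq: 2SAG} into a single probability that encodes the answer. First I would simplify the objective. Since \(\lambda_w = 0\) for \(w \neq v\), the inner minimization only concerns \(v\): for a scenario \(S\) with \(v \in V_S\) we get \(\min_{y^S \in \core(G_S)} \sum_{w} \lambda_w \bra{y_w - y^S_w}^+ = \lambda_v \bra{y_v - M_S}^+\), where \(M_S := \max_{y^S \in \core(G_S)} y^S_v\). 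Because \(\core(G_S)\) is the minimum (fractional) vertex cover polytope, whose vertices are integral by König's theorem, we have \(M_S \in \{0,1\}\), with \(M_S = 0\) exactly when \(v\) lies in no minimum vertex cover of \(G_S\) (equivalently, in every maximum independent set). As \(\E_S[\lambda_v \bra{y_v - M_S}^+]\) is nondecreasing in \(y_v\), the optimal first-stage choice is \(y_v = m_0 := \min_{y \in \core(G_0)} y_v\), and by building \(G_0\) so that \(v\) is \emph{essential} (in every minimum vertex cover) I force \(m_0 = 1\). With both quantities in \(\{0,1\}\) this yields
\[ \mathrm{OPT} = \lambda_v \, \P_{S \sim \D}\bra{M_S = 0} = \lambda_v \, \P_{S \sim \D}\bra{v \text{ lies in no minimum vertex cover of } G_S}. \]

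Next I would set up the reduction. From a satisfiability instance \(\varphi\) on \(n\) variables I construct, in polynomial time, a first-stage bipartite graph \(G_0\) and an efficiently samplable distribution \(\D\) over induced subgraphs: a scenario deletes the literal-vertices prescribed by a uniformly random truth assignment (never deleting \(v\)), and the gadget is designed so that the surviving induced graph forces \(v\) into every maximum independent set (a local path-of-length-two configuration at \(v\) realizes \(M_S = 0\)) precisely when the assignment satisfies all clauses, while pendant leaves keep \(v\) essential in \(G_0\) so that \(m_0 = 1\). Then \(\P[M_S = 0]\) equals the fraction of satisfying assignments, so \(\mathrm{OPT} = \lambda_v \cdot (\#\mathrm{sat})/2^n\), and in particular \(\mathrm{OPT} > 0\) if and only if \(\varphi\) is satisfiable.

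To conclude, I would argue that \(\D\) has exponential support but a polynomial-size description and is trivially samplable, hence is a legitimate instance of the implicit model. If some algorithm solved \eqref{eq: 2SAG} exactly in time polynomial in the input and the number of oracle calls, then running it on this instance and testing whether the returned optimum is positive would decide satisfiability in polynomial time, forcing \(\mathrm{P} = \mathrm{NP}\). This is consistent with the fact that only \emph{exact} evaluation is intractable: the SAA method of the next section estimates \(\mathrm{OPT}\) to small additive error with polynomially many samples, but separating \(\mathrm{OPT} = 0\) from \(\mathrm{OPT} = \lambda_v 2^{-n}\) is where the hardness resides.

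The routine parts are the objective simplification and the final bookkeeping. The main obstacle is the gadget: using \emph{only} vertex deletions and keeping every scenario bipartite with the inherited bipartition, I must engineer \(G_0\) so that \(v\) is essential in \(G_0\) and, for every assignment, the induced deletion places \(v\) in no minimum vertex cover if and only if all clauses are satisfied. Controlling a global property such as ``\(v\) lies in every maximum independent set'' through purely local, monotone (deletion-only) operations, while preserving bipartiteness, is the delicate point; I expect to follow the gadget of \cite{Faenza2024Two-stage} closely and to verify correctness through the König/Dulmage--Mendelsohn characterization of which vertices are forced into, or excluded from, minimum vertex covers.
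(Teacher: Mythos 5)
Your opening analysis is correct and in fact matches the skeleton of the paper's argument: since only one vertex carries a positive \(\lambda\), the objective collapses to \(\lambda_v\,\E_{S}\bra{y_v - M_S}^+\) with \(M_S = \max_{y^S \in \core(G_S)} y^S_v \in \{0,1\}\) (by integrality of the bipartite fractional vertex cover polytope), and pinning \(y_v = 1\) in the first stage via two pendant neighbours reduces everything to \(\lambda_v\,\P_{S\sim\D}\bra{v \text{ lies in no minimum vertex cover of } G_S}\). Where you diverge is the reduction target: you want to encode SAT so that this probability is \((\#\mathrm{sat})/2^n\) and test positivity, whereas the paper reduces from \emph{counting vertex covers} (\#P-hard by Provan--Ball): it builds \(G_0\) from an arbitrary graph \(G\) with \(d_v\) copies of each vertex \(v\), one vertex per edge \(e\) joined to \(\alpha\) and to the copies of \(e\)'s endpoints, and pendants \(\beta_1,\beta_2\) on \(\alpha\); a scenario keeps each block of copies independently with probability \(\tfrac12\), and then \(\alpha\) is exposed by a maximum matching (hence forced to \(0\)) exactly when the surviving set is a vertex cover of \(G\), giving an optimal value of \((\#\text{vertex covers})/2^{|V|}\). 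Note that in that reduction the optimum is \emph{never} zero (the full vertex set is always a cover), so the hardness genuinely sits in exact evaluation, not in a positivity test.

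The genuine gap is that the gadget carrying your entire reduction is not constructed; you explicitly defer it, and the one concrete hint you give is shaky: a path of length two \emph{centered} at \(v\) (two pendant leaves) forces \(v\) \emph{into} every minimum vertex cover, i.e.\ realizes \(M_S = 1\), which is the opposite of what you need; only with \(v\) as an \emph{endpoint} of such a path do you get \(M_S = 0\), and in any case a local configuration cannot by itself make \(M_S=0\) conditional on a global event such as ``all clauses satisfied.'' The mechanism you need is the one the paper uses: \(M_S = 0\) iff \(v\) is inessential, i.e.\ some maximum matching of \(G_S\) exposes \(v\), and this is arranged by giving every clause/edge vertex enough surviving literal-copies to be matched away from \(v\) precisely when the random deletion encodes a ``good'' object. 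Your SAT variant (clause vertices adjacent to \(\alpha\) and to copies of their literals, with the false literals deleted) can be made to work along exactly these lines, but as written the proof is missing its central construction and the accompanying verification that each scenario graph is bipartite with the inherited bipartition and is obtained by deletions only.
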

\begin{proof}
    As in \cite{Faenza2024Two-stage}, we will prove this hardness result by showing that if such an algorithm were to exists, then it could be used to count the number of vertex covers in a graph in polynomial time. However, counting the number of vertex covers in a graph is \#P-hard \cite{Provan1983Complexity}.
    
    Let \(G = (V,E)\) be any undirected graph. We create an instance of \eqref{eq: 2SAG} as follows.
    \begin{itemize}
        \item \emph{First-stage instance:}
            The first-stage graph \(G_0= (V_0, E_0)\) is given by 
            \begin{equation*}
                V_0 = \cbra{v_1,\ldots,v_{d_v} : v \in V} \cup E \cup \cbra{\alpha, \beta_1, \beta_2},
            \end{equation*}
            and
            \begin{equation*}
                E_0 = \cbra{e v_1, \ldots, e v_{d_v} : v \in e \in E} \cup \cbra{\alpha e: e \in E} \cup \cbra{\alpha \beta_1, \alpha \beta_2}.
            \end{equation*}
            This is a bipartite graph with bipartitions \(\cbra{v_1,\ldots,v_{d_v} : v \in V} \cup \cbra{\alpha}\) and \(E \cup \cbra{\beta_1, \beta_2}\).
        \item \emph{Second-stage instance:}
            Sampling from the second-stage distribution \(\D\) consists of the following: Add the players \(\cbra{v_1, \ldots, v_{d_v}}\) with probability \(\frac12\), independently for all \(v \in V\). Add the players \(E \cup \alpha\). The second-stage graph is the subgraph of \(G_0\) induced by these vertices. Observe that this graph is bipartite, and in particular has the same bipartition as \(G_0\).
        \item \emph{Dissatisfaction costs:}
            Set \(\lambda = 0\) except for \(\lambda_\alpha = 1\).
    \end{itemize}

    In the first stage, both \(\beta_1\) and \(\beta_2\) only have an edge to \(\alpha\), which means that at least one of them will be exposed. We have \(\nu(G_0 \setminus \beta_i) = \nu(G_0)\) for \(i = 1,2\). It follows that in any core element, they have value zero. To cover the edges between \(\beta_1\), \(\beta_2\) and \(\alpha\), it follows that in any core element, \(\alpha\) must have value one. Also observe that in the second stage, \(\alpha\) will have core value at most one. The objective \eqref{eq: 2SAG} in this case becomes
    \begin{equation*}
        \E_{S \sim \D} \bra{ \min_{y^S \in \core(G_S)} 1 - y^S_\alpha }.
    \end{equation*}

    A second-stage vertex set will look like
    \begin{equation*}
        \cbra{v_1,\ldots,v_{d_v} : v \in S} \cup E \cup \alpha,
    \end{equation*}
    for some \(S \subseteq V\). Denote this set by \(\Pi(S)\). For a given \(S \subseteq V\), the probability that \(\Pi(S)\) is the second-stage vertex set is \(\frac{1}{2^{|V|}}\). With this information, we can write down the expectation explicitly:
    \begin{equation*}
        \sum_{S \subseteq V} \frac{1}{2^{|V|}} \bra{ \min_{y^S \in \core(G_0[\Pi(S)])} 1 - y^S_\alpha }.
    \end{equation*}

    Suppose \(S \subseteq V\) is a vertex cover of \(G\). Since \(S\) is a vertex cover, for each edge \(e = uv \in E\) at least one of \(u\) and \(v\) is in \(S\). Without loss of generality, let us assume that \(v \in S\). Then in \(G_0[\Pi(S)]\), the vertex \(e\) can be matched to any copy of \(v\); since we added \(d_v\) copies of \(v\), there are definitely enough copies to cover all edge-vertices. This matching is perfect on one side of the bipartition, which means that it is maximum. Since \(\alpha\) is not matched in this matching, we must have \(y^S_\alpha = 0\).

    Suppose \(S \subseteq V\) is not a vertex cover of \(G\). Since \(S\) is not a vertex cover, there exists an edge \(e = uv \in E\) such that neither \(u\) nor \(v\) is in \(S\). Consequently, in \(G_0[\Pi(S)]\), the vertex \(e\) only has an edge to \(\alpha\). To cover this edge with the core element, we must have \(y^S_e + y^S_\alpha = 1\). Since \(e\) is not incident with any other edges, it is feasible to set \(y^S_e = 0\) and \(y^S_\alpha = 1\). This is also the solution that minimizes \(1 - y^S_\alpha\).

    From these arguments it follows that
    \begin{subequations}
    \begin{align}
        \sum_{S \subseteq V} \frac{1}{2^{|V|}} \bra{ \min_{y^S \in \core(G_0[S])} 1 - y^S_\alpha }
         &= \frac{1}{2^{|V|}} ( \text{\#vertex covers} (1-0) + \text{\#not vertex covers} (1-1) ), \\
         \label{eq: nr vertex covers}
         &= \frac{1}{2^{|V|}} \text{\#vertex covers}.
    \end{align}
    \end{subequations}
    So, if we could solve \eqref{eq: 2SAG} in polynomial-time, i.e., determine its optimal objective value (\(=\) \eqref{eq: nr vertex covers}), then we could also determine the number of vertex covers in any graph in polynomial-time. 
\end{proof}

\subsection{SAA Algorithm}
    The \emph{sample average approximation (SAA)} method is a well-known method in stochastic programming. It has been exploited often to approximate two-stage stochastic combinatorial problems \cite{Charikar2005Sampling,Swamy2008Sample,Swamy2012Sampling-based,Faenza2024Two-stage,Ravi2006Hedging}.
    Let \(S^1,\ldots,S^N\) be \(N\) i.i.d.\ samples drawn from the distribution \(\D\). We replace the objective function in \eqref{eq: 2SAG} by the average taken over our samples \(S^1,\ldots,S^N\):
    \begin{equation}
        \label{eq: SAA}
        \tag{SAA}
        \min_{y \in \core(G_0)} \frac1N \sum_{i=1}^N \bra{ \min_{y^i \in \core(G_i)} \sum_{v \in V_0 \cap V_i} \lambda_v \bra{y_v - y^i_v}^+ },
    \end{equation}
    where we use \(G_i = (V_i, E_i)\) to denote \(G_{S^i} = (V_{S^i}, E_{S^i})\) for \(i = 1, \ldots, N\).
    Observe that \eqref{eq: SAA} is an instance of \eqref{eq: 2SAG explicit}, where \(\S = \cbra{S^1,\ldots,S^N}\) and \(p_S = \frac1N\) for all \(S \in \S\), which means we can solve this problem by solving \eqref{eq: 2SAG explicit}, and in particular we can obtain an integral solution.

    For any instance \(\I\) of \eqref{eq: 2SAG}, we denote by \(y^\I\) the optimal solution for instance \(\I\), and by \(\val_\I(y)\) the objective value of \(y\) in instance \(\I\).

    \begin{theorem}
        \label{thm: SAA algorithm}
        Given an instance \(\I\) of \eqref{eq: 2SAG} where a sampling oracle specifies the second-stage distribution implicitly, and two parameters \(\epsilon > 0\), \(\alpha \in (0,1)\),
        one can compute a first-stage core element \(y\) such that
        \begin{equation*}
            \P(\val_\I(y) \leq \val_\I(y^\I) + \varepsilon) \geq 1 - \alpha,
        \end{equation*}
        in time polynomial in the size of \(\I\), \(\lambda\), \(\ln(1/\alpha)\) and \(1/\varepsilon\).
    \end{theorem}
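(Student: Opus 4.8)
The plan is to follow the standard SAA template, using the integrality result of \Cref{thm: integral solution} to keep the relevant first-stage decision space finite and small. First I would draw $N$ i.i.d.\ samples $S^1,\dots,S^N$ from the oracle and solve the resulting instance \eqref{eq: SAA} of \eqref{eq: 2SAG explicit}; by \Cref{thm: integral solution} I can compute an \emph{integral} optimal solution $\hat y$, so that $\hat y$ lies in the finite set $\mathcal Y$ of integral first-stage core elements (the minimum vertex covers of $G_0$), with $|\mathcal Y| \le 2^{|V_0|}$. Write $\widehat{\val}(y) = \frac1N\sum_{i=1}^N q_i(y)$ for the empirical objective and $\val_\I(y) = \E_{S\sim\D}[q_S(y)]$ for the true objective, where $q_S(y) = \min_{y^S \in \core(G_S)} \sum_{v} \lambda_v[y_v - y^S_v]^+$. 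The point to record here is that $\hat y$ minimizes $\widehat{\val}$ over all of $\core(G_0)$, not merely over $\mathcal Y$.

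The core estimate is a concentration bound. Since every core element lies in $[0,1]$, each $q_S(y)$ takes values in $[0,\Lambda]$ with $\Lambda = \sum_{v\in V_0}\lambda_v$, so $\widehat{\val}(y)$ is an average of i.i.d.\ $[0,\Lambda]$-valued variables with mean $\val_\I(y)$. By Hoeffding's inequality, for any fixed $y$ we have $\P(|\widehat{\val}(y) - \val_\I(y)| > \varepsilon/2) \le 2\exp(-N\varepsilon^2/(2\Lambda^2))$. Taking a union bound over the finitely many $y \in \mathcal Y$ together with the single fixed point $y^\I$, and choosing $N = \Theta\!\big(\tfrac{\Lambda^2}{\varepsilon^2}(|V_0| + \ln(1/\alpha))\big)$, guarantees with probability at least $1-\alpha$ that $|\widehat{\val}(y) - \val_\I(y)| \le \varepsilon/2$ holds simultaneously for all $y \in \mathcal Y$ and for $y^\I$.

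On this high-probability event I would chain the inequalities
\[
    \val_\I(\hat y) \le \widehat{\val}(\hat y) + \tfrac{\varepsilon}{2} \le \widehat{\val}(y^\I) + \tfrac{\varepsilon}{2} \le \val_\I(y^\I) + \varepsilon,
\]
where the first step uses the uniform bound over $\mathcal Y \ni \hat y$, the middle step uses that $\hat y$ minimizes $\widehat{\val}$ over $\core(G_0) \ni y^\I$, and the last step uses the bound at the fixed point $y^\I$. This is exactly the claimed guarantee. For the running time, the algorithm makes $N$ oracle calls and solves one linear program \eqref{eq: 2SAG LP} of size polynomial in $N$ and $|V|$, and $N$ is polynomial in the size of $\I$, in $\lambda$ (through $\Lambda$), in $\ln(1/\alpha)$, and in $1/\varepsilon$.

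I expect the main subtlety to be that the true optimum $y^\I$ need not be integral (indeed $\val_\I$ is convex and may be minimized in the interior of $\core(G_0)$), so one cannot simply restrict the comparison to $\mathcal Y$. The argument sidesteps this because $y^\I$ is a single \emph{fixed} point, for which one application of Hoeffding suffices in the last inequality; integrality is needed only to ensure that the \emph{computed} solution $\hat y$ falls in the small finite set $\mathcal Y$, so that the uniform bound in the first inequality is a union over at most $2^{|V_0|}$ points rather than over the whole continuum $\core(G_0)$. Getting these two roles of concentration right---uniform over $\mathcal Y$ versus pointwise at $y^\I$---is the crux of the proof.
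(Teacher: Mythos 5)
Your proposal is correct and follows essentially the same route as the paper: use \Cref{thm: integral solution} to ensure the computed SAA optimum $\hat y$ lies in a set of at most $2^{|V_0|}$ integral core elements, apply Hoeffding with range $\Lambda=\sum_{v\in V_0}\lambda_v$, take a union bound over that finite set (treating the possibly fractional $y^\I$ as a single fixed comparison point), and choose $N=O\bigl(\Lambda^2(|V_0|+\ln(1/\alpha))/\varepsilon^2\bigr)$. The only cosmetic difference is that the paper applies Hoeffding once per bad integral point to the differences $G_i(y)-G_i(y^\I)$ rather than proving two-sided uniform convergence and chaining, which yields the same bound up to constants.
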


    Let \(\hat{y}\) be an extreme point optimal solution for \eqref{eq: SAA}, which is integral by \Cref{thm: integral solution}. As in \cite{Faenza2024Two-stage}, the main ingredient to prove \Cref{thm: SAA algorithm} is the next lemma.
    
    \begin{lemma}
        \label{lem:SAA}
        For any \(\alpha \in (0,1)\), the following holds with probability at least \(1-\alpha\)
        \begin{equation*}
            \val_\I(\hat{y}) \leq \val_\I(y^\I) + \sqrt{2} \sum_{v \in V_0} \lambda_v \sqrt{\frac{\ln(2^{|V_0|}/\alpha)}{N}}.
        \end{equation*}
    \end{lemma}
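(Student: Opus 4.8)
The plan is to run the standard sample-average-approximation analysis, following \cite{Faenza2024Two-stage}, whose two structural inputs are that the per-scenario loss is uniformly bounded and that the only first-stage solutions we ever need to reason about are the integral core elements, of which there are at most \(2^{|V_0|}\). First I would fix notation: write \(\widehat{\val}(y)\) for the \eqref{eq: SAA} objective evaluated at a first-stage core element \(y\) (the bracketed quantity minimized there), so that \(\hat y\) minimizes \(\widehat{\val}\) over \(\core(G_0)\), while \(\val_\I(y)\) denotes the true objective of \eqref{eq: 2SAG}. The backbone is the decomposition
\[
    \val_\I(\hat y) = \bigl(\val_\I(\hat y)-\widehat{\val}(\hat y)\bigr) + \bigl(\widehat{\val}(\hat y)-\widehat{\val}(y^\I)\bigr) + \bigl(\widehat{\val}(y^\I)-\val_\I(y^\I)\bigr) + \val_\I(y^\I).
\]
The middle difference is \(\le 0\) because \(\hat y\) is optimal for \eqref{eq: SAA} and \(y^\I\in\core(G_0)\) is feasible for it. It therefore suffices to bound the first and third differences each by \(t:=\sum_{v\in V_0}\lambda_v\sqrt{\ln(2^{|V_0|}/\alpha)/(2N)}\), which delivers the claimed \(2t=\sqrt2\sum_{v\in V_0}\lambda_v\sqrt{\ln(2^{|V_0|}/\alpha)/N}\).

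For fixed \(y\), the per-scenario loss \(Z_S(y)=\min_{y^S\in\core(G_S)}\sum_{v\in V_0\cap V_S}\lambda_v[y_v-y^S_v]^+\) is an i.i.d.\ random variable over \(S\sim\D\), with \(\val_\I(y)=\E_{S\sim\D}[Z_S(y)]\) and \(\widehat{\val}(y)=\frac1N\sum_{i=1}^N Z_{S^i}(y)\). Using that every core element lies in \([0,1]\) (noted in the preliminaries), \([y_v-y^S_v]^+\le y_v\le 1\), so \(0\le Z_S(y)\le\sum_{v\in V_0}\lambda_v=:R\). A one-sided Hoeffding inequality then gives, for each fixed \(y\) and each direction, \(\P(\val_\I(y)-\widehat{\val}(y)\ge t)\le\exp(-2Nt^2/R^2)\), and with the above choice of \(t\) the right-hand side equals \(\alpha/2^{|V_0|}\). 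The third difference concerns the single fixed point \(y^\I\), so one application suffices; the first difference concerns the data-dependent point \(\hat y\), and here I would invoke \Cref{thm: integral solution} to conclude that \(\hat y\) is integral, hence a \(0/1\) vector, so it lies in a set of at most \(2^{|V_0|}\) candidates over which I union-bound the one-sided deviation \(\val_\I(y)-\widehat{\val}(y)\ge t\). This uniform bound covers whichever integral \(\hat y\) the samples produce, and it is exactly where \(\ln(2^{|V_0|}/\alpha)\) enters.

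The only delicate point, and the main obstacle, is that \(y^\I\) need not be integral, so it does not belong to the union-bound set used for \(\hat y\) and must be handled as a separate fixed point; the failure budgets for the \(\le 2^{|V_0|}\) integral candidates and for the single point \(y^\I\) then have to be combined so that the total is at most \(\alpha\). This budgeting is what pins down the constant inside the logarithm (a fully careful accounting yields \(\ln((2^{|V_0|}+1)/\alpha)\), which is absorbed into the stated \(\ln(2^{|V_0|}/\alpha)\) up to a negligible change). Everything else is routine: fixing a measurable optimal second-stage response so that each \(Z_S(y)\) is a well-defined bounded random variable, the independence of \(S^1,\dots,S^N\) needed for Hoeffding, and the inequality \(\widehat{\val}(\hat y)\le\widehat{\val}(y^\I)\), which is immediate from \(\hat y\) being an optimal (integral, extreme-point) solution of \eqref{eq: SAA}.
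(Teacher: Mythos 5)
Your proposal is correct in substance but takes a genuinely different route from the paper, and the difference shows up in the constant. You split \(\val_\I(\hat y)-\val_\I(y^\I)\) into three terms and control two empirical deviations separately --- a uniform one over the \(2^{|V_0|}\) integral candidates for \(\hat y\), and a pointwise one for the (possibly fractional) \(y^\I\) --- which forces a union bound over \(2^{|V_0|}+1\) events and, as you concede, yields \(\ln((2^{|V_0|}+1)/\alpha)\) rather than the stated \(\ln(2^{|V_0|}/\alpha)\). The paper avoids this entirely: for each integral \(y\) outside the good set it applies Hoeffding once to the per-sample \emph{differences} \(X_i = G_i(y)-G_i(y^\I)\), using that \(y\) being SAA-optimal implies \(\frac1N\sum_i X_i\le 0\) while \(\E[X_i]=\val_\I(y)-\val_\I(y^\I)>\varepsilon\). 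Since each \(X_i\) ranges over \([-R,R]\) with \(R=\sum_{v\in V_0}\lambda_v\), the interval width \(2R\) produces exactly the same \(\sqrt2\) factor you get from doubling \(t\), but only \(2^{|V_0|}\) events need to be controlled and \(y^\I\) never enters a concentration bound on its own, so the constant inside the logarithm is exactly \(2^{|V_0|}\). Your version proves a marginally weaker inequality than the one stated (negligible in the final application, since \Cref{thm: SAA algorithm} only needs the \(N=\mathrm{poly}\) scaling); if you want the lemma verbatim, switch to the paper's single-application-to-the-difference argument. All other ingredients --- the integrality of \(\hat y\) from \Cref{thm: integral solution}, the bound \(0\le G_i(y)\le R\) from \(y\le 1\) on the core, and \(\widehat{\val}(\hat y)\le\widehat{\val}(y^\I)\) --- match the paper.
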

    The proof of this lemma follows the proof of the corresponding lemma in \cite{Faenza2024Two-stage} closely. The key difference is that where \cite{Faenza2024Two-stage} uses that the number of stable matchings is bounded, we use the integrality result of previous section to bound the amount of core solutions we have to consider. We denote by \(\val_\SAA(y)\) the objective value of \(y\) in an instance of \eqref{eq: SAA}.
    \begin{proof}[Proof of \Cref{lem:SAA}]
        For \(i \in \cbra{1,\ldots,N}\) and \(y \in \core(G_0)\), let
        \begin{equation*}
            G_i(y) = \min_{y^i \in \core(G_i)} \sum_{v \in V_0 \cap V_i} \lambda_v \bra{y_v - y^i_v}^+.
        \end{equation*}
        Let
        \begin{equation*}
            \varepsilon = \sqrt{2} \sum_{v \in V_0} \lambda_v \sqrt{\frac{\ln(2^{|V_0|}/\alpha)}{N}},
        \end{equation*}
        and let \(\F^\varepsilon = \cbra{y \in \core(G_0) : \val_\I(y) \leq \val_\I(y^\I) + \varepsilon}\). We obtain
        \begin{equation*}
        \begin{split}
            \P(\hat{y} \notin \F^\varepsilon)
                &\leq \sum_{y \notin \F^\varepsilon, y \in \{0,1\}^{|V_0|}} \P(y \text{ is an optimal solution for \eqref{eq: SAA}}), \\
                &\leq \sum_{y \notin \F^\varepsilon, y \in \{0,1\}^{|V_0|}} \P\Par{\val_\SAA(y) \leq \val_\SAA(y^\I)}, \\
                &= \sum_{y \notin \F^\varepsilon, y \in \{0,1\}^{|V_0|}} \P\Par{\frac1N \sum_{i=1}^N G_i(y) \leq \frac1N \sum_{i=1}^N G_i(y^\I)}, \\
                &= \sum_{y \notin \F^\varepsilon, y \in \{0,1\}^{|V_0|}} \P\Par{\frac1N \sum_{i=1}^N \Par{G_i(y) - G_i(y^\I)} \leq 0}. \\
        \end{split}
        \end{equation*}
        Observe that we can restrict ourselves to sum over integral \(y\) not in \(\F^\varepsilon\), as \(\hat{y}\) is integral. 
        Fix \(y \notin \F^\varepsilon\), \(y \in \{0,1\}^{|V_0|}\) and let \(X_i = G_i(y) - G_i(y^\I)\). We will bound the probability \(\P(\frac1N \sum_{i=1}^N X_i \leq 0)\) using the classical Hoeffding's inequality, stated below.
        \begin{lemma}
            [\cite{Hoeffding1994Probability,Faenza2024Two-stage}]
            Let \(X_1,\ldots,X_n\) be independent random variables such that \(a_i \leq X_i \leq b_i\) almost surely. Consider the sum of these random variables \(S_n = X_1 + \cdots + X_n\). The Hoeffding's inequality states that for all \(t > 0\),
            \begin{equation*}
                \P(\E(S_n) - S_n \geq t) \leq \exp\Par{\frac{-2t^2}{\sum_{i=1}^n (b_i - a_i)^2}}.
            \end{equation*}
        \end{lemma}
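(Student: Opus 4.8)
The final statement is the classical Hoeffding inequality, so the plan is to reproduce the standard Chernoff-type argument. First I would center the variables: set $Z_i = \E(X_i) - X_i$, so that each $Z_i$ has mean zero, satisfies $\E(X_i) - b_i \le Z_i \le \E(X_i) - a_i$ (a range of width $b_i - a_i$), and $\E(S_n) - S_n = \sum_{i=1}^n Z_i$. For any parameter $s > 0$, applying the monotone map $x \mapsto e^{sx}$ together with Markov's inequality gives
\[
    \P\Par{\E(S_n) - S_n \ge t} = \P\Par{e^{s\sum_i Z_i} \ge e^{st}} \le e^{-st}\,\E\bra{e^{s\sum_i Z_i}}.
\]
Since the $X_i$ are independent, so are the $Z_i$, and the expectation factorizes as $\prod_{i=1}^n \E[e^{sZ_i}]$, reducing the problem to bounding each single-variable moment generating function.

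The technical core, and the step I expect to be the main obstacle, is \emph{Hoeffding's lemma}: for a zero-mean random variable $Z$ with $c \le Z \le d$ almost surely, one has $\E[e^{sZ}] \le \exp\Par{s^2(d-c)^2/8}$. I would prove this by writing $Z$ as the convex combination $Z = \tfrac{d-Z}{d-c}\,c + \tfrac{Z-c}{d-c}\,d$, using convexity of $x \mapsto e^{sx}$ to bound $e^{sZ}$ by the corresponding combination of $e^{sc}$ and $e^{sd}$, and taking expectations (invoking $\E Z = 0$). This yields $\E[e^{sZ}] \le e^{\psi(s)}$ for an explicit function $\psi$; the bound $\psi(s) \le s^2(d-c)^2/8$ then follows by checking $\psi(0) = \psi'(0) = 0$ and $\psi''(s) \le (d-c)^2/4$ uniformly in $s$ (the second derivative equals the variance of a two-point random variable supported on $\{c,d\}$, hence is at most a quarter of the squared range), and integrating twice.

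Applying the lemma to each $Z_i$ with $d - c = b_i - a_i$ yields
\[
    \P\Par{\E(S_n) - S_n \ge t} \le \exp\Par{-st + \frac{s^2}{8}\sum_{i=1}^n (b_i-a_i)^2}.
\]
Finally I would optimize the still-free parameter $s$: the exponent is a convex quadratic in $s$, minimized at $s^\star = 4t / \sum_{i=1}^n (b_i - a_i)^2 > 0$, and substituting $s^\star$ collapses the bound to $\exp\Par{-2t^2 / \sum_{i=1}^n (b_i-a_i)^2}$, exactly the claimed inequality. Apart from Hoeffding's lemma, every step is routine.
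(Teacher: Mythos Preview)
Your argument is the standard and correct proof of Hoeffding's inequality: the Chernoff bound, followed by Hoeffding's lemma for each centered summand, followed by optimization over the exponential parameter. All steps are sound, including the sketch of Hoeffding's lemma via convexity and the second-derivative bound.

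Note, however, that the paper does not prove this statement at all: it is quoted as a classical result with a citation to Hoeffding's original paper and to \cite{Faenza2024Two-stage}, and is used as a black box inside the proof of \Cref{lem:SAA}. So there is no ``paper's own proof'' to compare against; you have supplied the textbook proof where the paper simply invokes the inequality.
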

        It follows from \(0 \leq y \leq 1\) that \(0 \leq [y_v - y^i]^+ \leq 1\), and consequently that
        \begin{equation*}
            0 \leq G_i(y) \leq \sum_{v \in V_0 \cap V_i} \lambda_v \leq \sum_{v \in V_0} \lambda_v.
        \end{equation*}
        Let \(a_i = - \sum_{v \in V_0} \lambda_v\) and \(b_i = \sum_{v \in V_0} \lambda_v\). Then \(a_i \leq X_i \leq b_i\). Now since \(y \notin \F^\varepsilon\), it holds that \(\E(X_i) \geq \varepsilon\). Let \(S_N = \sum_{i=1}^N X_i\). We obtain
        \begin{equation*}
        \begin{split}
            \P\Par{\frac1N \sum_{i=1}^N X_i \leq 0}
                &= \P\Par{\frac1N S_N \leq 0} 
                = \P\Par{S_N \leq 0}, \\
                &= \P\Par{\E(S_N) - S_N \geq \E(S_N)}, \\
                &\leq \P\Par{\E(S_N) - S_N \geq \varepsilon N}, \\
                &\leq \exp\Par{\frac{-2 \varepsilon^2 N^2}{\sum_{i=1}^N (b_i - a_i)^2}}, \\
                &= \exp\Par{\frac{-2^2 \Par{\sum_{v \in V_0} \lambda_v}^2 \ln(2^{|V_0|}/\alpha) N}{N (2 \sum_{v \in V_0} \lambda_v)^2}}, \\
                &= \exp\Par{-\ln(2^{|V_0|}/\alpha)} = \frac{\alpha}{2^{|V_0|}}.
        \end{split}
        \end{equation*}
        So finally,
        \begin{equation*}
            \P(\hat{y} \notin \F^\varepsilon)
                \leq \sum_{y \notin \F^\varepsilon, y \in \{0,1\}^{|V_0|}} \P\Par{\frac1N \sum_{i=1}^N X_i \leq 0}
                \leq \sum_{y \notin \F^\varepsilon, y \in \{0,1\}^{|V_0|}} \frac{\alpha}{2^{|V_0|}}
                \leq \alpha,
        \end{equation*}
        where the last inequality follows from the fact that since \(y \in \cbra{0,1}^{|V_0|}\), there are at most \(2^{|V_0|}\) terms in the sum.
    \end{proof}
    
    Finally, for any \(\varepsilon > 0\) and \(\alpha \in (0,1)\), setting \(N = 2 \Par{\sum_{v \in V_0} \lambda_v}^2 \ln(2^{|V_0|}/\alpha) / \varepsilon^2\) proves \Cref{thm: SAA algorithm}.
\section{Multistage Setting}
\label{sec:multistage}

In this section we consider a multistage setting where there are \(k\) stages, with predetermined graphs (no distribution). We denote by \(G_i = (V_i, E_i)\) the graph of the \(i\)'th stage for \(i = 1, \ldots, k\).
This setting resembles the setting in \cite{Fluschnik2022Mutlistage}, who discuss multistage vertex cover. As before, and as in \cite{Fluschnik2022Mutlistage}, we will consider the problem of minimizing the absolute difference. Again, one can still choose later to minimize the positive difference by choosing which variables to take into the objective. We can formulate this problem as follows.
\begin{equation}
    \label{eq:multistage}
    \min_{y^i \in \core(G_i), i = 1, \ldots, k} \sum_{i = 1}^{k-1} \sum_{v \in V_i \cap V_{i+1}} \lambda^i_v \Abs{y^i_v - y^{i+1}_v}.
\end{equation}
Like before, we can formulate this as the following LP.
\begin{equation}
\label{eq:multistage_LP}
\begin{split}
    \min \quad 
        & \sum_{i = 1}^{k-1} \sum_{v \in V_i \cap V_{i+1}} \lambda^i_v \Par{\delta^i_v + d^i_v} \\
    \text{s.t.} \quad
        & y^i_u + y^i_v \geq 1 \quad \forall uv \in E_i, \forall i = 1, \ldots, k \\
        & 1^\top y^i = \nu(G_i) \quad \forall i = 1, \ldots, k \\
        & y^i \in \R^{V_i}_{\geq 0} \quad \forall i = 1, \ldots, k \\
        & y^i_v -  y^{i+1}_v \leq \delta^i_v \quad \forall v \in V_i \cap V_{i+1}, \forall i = 1, \ldots, k-1 \\
        & y^{i+1}_v - y^i_v \leq d^i_v \quad \forall v \in V_i \cap V_{i+1}, \forall i = 1, \ldots, k-1 \\
        & \delta^i \in \R^{V_i \cap V_{i+1}}_{\geq0} \quad \forall i = 1, \ldots, k-1 \\
        & d^i \in \R^{V_i \cap V_{i+1}}_{\geq0} \quad \forall i = 1, \ldots, k-1
\end{split}
\end{equation}
Let \(V = \cup_{i=1}^k V_i\).
This LP has \(O(|V| k)\) variables and \(O(|V|^2 k)\) constraints. So it has size polynomial in the input size, which means we can solve \eqref{eq:multistage} in polynomial-time, by solving \eqref{eq:multistage_LP}. 
Like in \Cref{sec:explicit}, we can show that the feasible region of \eqref{eq:multistage_LP} is an integral polyhedron.

\begin{theorem}
    \label{thm:multistage_int_solution}
    The feasible region of \eqref{eq:multistage_LP} is an integral polyhedron.
\end{theorem}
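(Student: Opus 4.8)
The plan is to mimic the proof of \Cref{thm: integral solution} essentially verbatim, replacing the star-shaped scenario structure of \eqref{eq: 2SAG LP} by the path-shaped stage structure of \eqref{eq:multistage_LP}. As before, I drop the equality constraints $1^\top y^i = \nu(G_i)$ and consider an arbitrary objective
\[
\sum_{i=1}^k \sum_{v \in V_i} \alpha^i_v y^i_v + \sum_{i=1}^{k-1} \sum_{v \in V_i \cap V_{i+1}} \left( \beta^i_v \delta^i_v + b^i_v d^i_v \right),
\]
with $\beta, b \geq 0$ (otherwise the relaxed LP is unbounded, so there are no extreme points in those directions). I fix a common bipartition $V_1, V_2$ of all the graphs $G_i$, and choose the normalizing constant $\varepsilon > 0$ exactly as in \Cref{sec:explicit} (the reciprocal of $1$ plus the sum of the magnitudes of all objective coefficients). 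Since for every extreme point of the polyhedron there is an objective making it the unique optimum, exhibiting an integral optimum for each such objective proves the theorem.

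Next I build an auxiliary flow network $G' = (V', A)$ with $V' = \{s,t\} \cup \{v^i : v \in V_i, i = 1, \ldots, k\}$. The source/sink arcs $s v^i$ (for $v \in V_i \cap V_1$) and $v^i t$ (for $v \in V_i \cap V_2$) get capacity $1 + \varepsilon \alpha^i_v$; the within-stage edge arcs $u^i v^i$ (for $uv \in E_i$, $u \in V_1$, $v \in V_2$) are uncapacitated; and for each consecutive pair of stages I add, between $v^i$ and $v^{i+1}$, a pair of oppositely oriented difference arcs with capacities $\varepsilon \beta^i_v$ and $\varepsilon b^i_v$, oriented according to whether $v \in V_1$ or $v \in V_2$ exactly as in the two-stage construction (with stage $i$ playing the role of $G_0$ and stage $i+1$ the role of $G_S$). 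The only structural change is that a copy $v^i$ may now carry difference arcs to both stage $i-1$ and stage $i+1$, so the difference subgraph of a fixed original vertex is a path through its stages rather than a star. Since $G'$ is still an ordinary capacitated flow network, its max-flow LP has a TU constraint matrix, and the argument of \Cref{lem: integral polyhedron} shows that the corresponding dual polyhedron is integral.

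I then reproduce the two mapping lemmas. The substitution $\gamma^i_v = 1 - y^i_v$ for $v \in V_1$ and $\gamma^i_v = y^i_v$ for $v \in V_2$ turns the dual constraints into those of the $\varepsilon$-perturbed multistage LP (the analogue of \eqref{eq: 2SAG LP eps}): the source/sink-capacity duals give $\gamma^i_v + y^i_v \geq 1$ and $-\gamma^i_v + y^i_v \geq 0$, the difference-arc duals give the four $\delta^i_v, d^i_v$ inequalities, and the within-stage edge duals $\gamma^i_v - \gamma^i_u \geq 0$ telescope back to $y^i_u + y^i_v \geq 1$ exactly as in the second claim of \Cref{lem: flow dual to 2SAG eps}. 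Finally, the analogue of \Cref{lem: 2SAG eps to 2SAG} uses $\varepsilon$ small enough that any integral optimum of the perturbed LP must satisfy $1^\top y^i = \nu(G_i)$ in every stage; such a point is therefore feasible and optimal for \eqref{eq:multistage_LP}, and integral by the integrality of the dual polyhedron. This yields an integral optimum for every objective, proving the theorem.

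The main obstacle is the ``without loss of generality'' step inside the analogue of the second claim of \Cref{lem: flow dual to 2SAG eps}, where a dual-feasible point must be adjusted so that $\gamma^i_v + y^i_v = 1$ for $v \in V_1$ (and symmetrically $-\gamma^i_v + y^i_v = 0$ for $v \in V_2$) without changing the objective or destroying feasibility. In the two-stage case, lowering $\gamma_v$ at the center propagates only one step into the scenario copies $v^S$; in the path setting a decrease of $\gamma^i_v$ can force a compensating decrease of $\gamma^{i+1}_v$ through the $\delta^i$-arc and of $\gamma^{i-1}_v$ through the $d^{i-1}$-arc, so the cascade can travel along the path in both directions. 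I expect to control this either by selecting, among all dual-feasible points with the prescribed objective, one that is lexicographically minimal in $\gamma$ (at which the tight-constraint structure forces the desired equalities), or by pushing the adjustment stage-by-stage and using the telescoping edge inequalities to show the cascade terminates after finitely many stages; in either case the objective is unaffected because it does not involve the $\gamma$ variables.
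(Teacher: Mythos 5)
Your proposal matches the paper's proof essentially verbatim: the same path-structured auxiliary flow network, the same TU/duality argument for integrality of the dual polyhedron, and the same two mapping lemmas followed by the \(\varepsilon\)-perturbation argument to restore the equalities \(1^\top y^i = \nu(G_i)\). The one obstacle you flag---the cascade in the ``without loss of generality'' step---is resolved in the paper along the lines of your second suggestion: one takes a maximal contiguous block of stages \(\{j,\ldots,j+l\}\) on which \(\hat{\gamma}^i_v + \hat{y}^i_v > 1\) and lowers every \(\hat{\gamma}^i_v\) in that block by the same small \(\eta\), so the difference constraints internal to the block are unchanged (the \(\eta\)'s cancel) and the constraints at the two boundaries are controlled by the equalities holding just outside the block.
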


Before we go into the proof, we discuss a consequence of this theorem. In particular, we can use this theorem to
prove a result about the \emph{multistage} vertex cover problem, which we now define formally. A vertex cover \(C\) in a graph \(G = (V, E)\) is a subset of vertices \(C \subseteq V\) such for each edge \(uv \in E\), we have \(\{u, v\} \cap C \neq \emptyset\). In the multistage vertex cover problem there are \(k\) stages with predetermined graphs \(G_1, \ldots, G_k\). The goal is to find a vertex cover for each stage \(C_1, \ldots, C_k\) such that the total absolute difference between stages is minimized, i.e. 
\begin{equation*}
    \sum_{i=1}^{k-1} |C_i \setminus C_{i+1}| + |C_{i+1} \setminus C_{i}|.
\end{equation*}

The authors in \cite{Fluschnik2022Mutlistage} have shown that the multistage vertex cover problem is NP-hard, already when \(k = 2\), the first-stage graph is a path and the second-stage graph is a tree. Both graphs are indeed very simple bipartite graphs. Our result shows that the difficulty lies in the fact that the bipartitions for the two stages are different. In fact, with the additional requirement that the bipartitions are the same, we prove that the problem becomes polynomial time solvable for any number of stages.

\begin{theorem}
    The multistage vertex cover problem is solvable in polynomial time when all \(G_i\), \(i = 1, \ldots, k\), are bipartite with the same bipartition.
\end{theorem}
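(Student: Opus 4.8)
The plan is to reduce the multistage vertex cover problem to the multistage core-selection problem \eqref{eq:multistage} and then invoke the integrality of \eqref{eq:multistage_LP} (\Cref{thm:multistage_int_solution}). The difficulty that must be overcome first is that a core element is a \emph{minimum} (fractional) vertex cover, whereas the multistage vertex cover problem ranges over \emph{all} vertex covers: a minimum vertex cover in one stage need not be close to a minimum vertex cover in the next, even when two larger common vertex covers coincide, so feeding the input graphs directly into \eqref{eq:multistage_LP} would minimize over the wrong set. Instead I would augment each graph with a gadget that turns every vertex cover of the original graph into a minimum vertex cover of the augmented graph.

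Concretely, given the same-bipartition graphs \(G_1,\dots,G_k\) on a common vertex set \(V\) with bipartition \((L,R)\), for each \(v \in V\) I introduce a new vertex \(v^*\) on the opposite side and add the pendant edge \(vv^*\) to every stage, obtaining graphs \(G_1',\dots,G_k'\). These share the same vertex set and the same bipartition \((L \cup \{v^* : v \in R\},\, R \cup \{v^* : v \in L\})\), so \Cref{thm:multistage_int_solution} applies. The \(n := |V|\) pendant edges are pairwise disjoint, so \(\nu(G_i') \ge n\); since \(V\) itself is a vertex cover of \(G_i'\) of size \(n\), König's theorem gives \(\nu(G_i') = n\). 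The key structural claim I would establish is that a set \(C' \subseteq V(G_i')\) is a minimum (size-\(n\)) vertex cover of \(G_i'\) if and only if \(C' = C \cup \{v^* : v \notin C\}\) for some vertex cover \(C\) of \(G_i\): the \(n\) disjoint pendant edges force exactly one of \(v,v^*\) into any size-\(n\) cover, and since no \(v^*\) covers an original edge, \(C := C' \cap V\) must already cover \(G_i\); the converse is immediate. Hence the integral elements of \(\core(G_i')\) are in bijection with \emph{all} vertex covers of \(G_i\) via \(y_v = \mathbf{1}[v \in C]\) on original vertices.

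With the gadget in place, I would instantiate \eqref{eq:multistage} on \(G_1',\dots,G_k'\) with weights \(\lambda^i_v = 1\) for \(v \in V\) and \(\lambda^i_{v^*} = 0\) for the pendant vertices. Because the pendant vertices carry zero weight, the objective of \eqref{eq:multistage_LP} evaluated on an integral solution equals \(\sum_{i=1}^{k-1}\sum_{v \in V} |\mathbf{1}[v \in C_i] - \mathbf{1}[v \in C_{i+1}]| = \sum_{i=1}^{k-1} |C_i \setminus C_{i+1}| + |C_{i+1}\setminus C_i|\), exactly the multistage vertex cover cost. Two inequalities then finish the argument: every choice of vertex covers \((C_1,\dots,C_k)\) lifts to a feasible integral solution of \eqref{eq:multistage_LP} of equal objective, so the LP optimum is at most the multistage vertex cover optimum; conversely, by \Cref{thm:multistage_int_solution} the LP has an integral optimum, and projecting it back to \((C_i)_i\) yields vertex covers of equal objective, so the multistage vertex cover optimum is at most the LP optimum. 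The two optima therefore coincide, and the projected integral optimum is an optimal multistage vertex cover, computable in polynomial time since \eqref{eq:multistage_LP} has polynomial size.

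The main obstacle is the first conceptual step: recognizing that the naive reduction is wrong because the core captures only minimum vertex covers, and then designing the pendant gadget that flattens \emph{all} vertex covers to minimum ones while simultaneously preserving bipartiteness (so \Cref{thm:multistage_int_solution} still applies) and the objective (so the zero-weighted pendants do not perturb the cost). Once the vertex-cover correspondence is verified, the remaining steps are routine bookkeeping.
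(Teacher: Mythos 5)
Your reduction is internally correct, but it solves a different problem from the one the paper's own proof addresses, and the divergence comes from how one reads the (admittedly underspecified) problem statement. The paper's proof is a three-line direct application: feed \(G_1,\ldots,G_k\) themselves into \eqref{eq:multistage_LP} with \(\lambda^i_v = 1\), invoke \Cref{thm:multistage_int_solution}, and read off the integral optimum as a sequence of vertex covers. Because \eqref{eq:multistage_LP} carries the constraint \(1^\top y^i = \nu(G_i)\), this optimizes over \emph{minimum} vertex covers only --- precisely the restriction you identified and set out to remove with your pendant gadget. The gadget itself is sound: the \(n\) disjoint pendant edges force \(\nu(G_i') = n\) and put the size-\(n\) covers of \(G_i'\) in bijection with \emph{all} covers of \(G_i\), the augmented graphs keep a common bipartition so \Cref{thm:multistage_int_solution} still applies, and the zero weights on the pendant copies make the two objectives coincide at an integral optimum. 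So if the problem genuinely ranges over arbitrary vertex covers, your proof is the correct one.

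However, the minimum-vertex-cover reading is almost certainly the intended one, and this is where your opening premise goes astray. Under your own standing assumption of a common vertex set \(V\), the arbitrary-cover version is trivial: taking \(C_i = V\) for every \(i\) is feasible and has cost zero, so no gadget and no LP are needed at all. The problem only becomes meaningful --- and only lines up with the NP-hard problem of \cite{Fluschnik2022Mutlistage} that the theorem is contrasted against --- once the covers are required to be minimum (or of bounded size). Read that way, the concern that the direct reduction ``minimizes over the wrong set'' dissolves, and the paper's one-line argument suffices. Your construction is not wasted: it shows the same machinery also handles the all-covers variant in the regime (vertex sets differing across stages) where that variant is nontrivial, but as a proof of the theorem as the paper intends it, the pendant vertices are an unnecessary detour around a constraint that is actually part of the problem.
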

\begin{proof}
    We can solve the multistage vertex cover problem on \(G_i\), \(i = 1, \ldots, k\) by modeling it as a multistage assignment game with the LP in~\eqref{eq:multistage_LP}. There, we set  \(\lambda^i_v = 1\) for all \(i = 1, \ldots, k-1\) and all \(v \in V_i \cap V_{i+1}\). By \Cref{thm:multistage_int_solution} we can obtain an integral optimal solution, which means that the vectors \(y^i\) in this solution indicate vertex covers \(C_i\).
\end{proof}

Now we go into the proof of \Cref{thm:multistage_int_solution}, following the same line of argument as in the proof of \Cref{thm: integral solution}.

We will consider \eqref{eq:multistage_LP} with an arbitrary objective, and show that for each objective we can find an integral optimal solution. Since for each extreme point of a polyhedron there is an objective such that the extreme point is the unique optimal solution, this proves that the polyhedron is integral. We consider the following objective, where \(\alpha\) can take any value, \(\beta \geq 0\) and \(b \geq 0\). Note that if \(\beta^i_v < 0\) or \(b^i_v < 0\) for any \(v \in V_i \cap V_{i+1}\), \(i = 1, \ldots, k-1\), then the LP becomes unbounded, and so there are no extreme points in those directions.
\begin{equation*}
    \min \quad 
        \sum_{i=1}^k \sum_{v \in V_i} \alpha^i_v y_v+ \sum_{i=1}^{k-1} \sum_{v \in V_i \cap V_{i+1}} \beta^i_v \delta^i_v + b^i_v d^i_v
\end{equation*}

We will formulate a maximum flow problem on an auxiliary graph, and compute its dual LP. It is known that the constraint matrix of a maximum flow LP is always totally unimodular (TU). The dual LP has the transpose as constraint matrix, which is then also TU. Since in addition the right hand side of the dual constraints are integral, this means the feasible region of the dual is an integral polyhedron. (See e.g.~\cite{Schrijver2003Combinatorial} for properties of TU matrices.) Finally, we show that we can map an optimal dual solution to an optimal solution for \eqref{eq:multistage_LP}.

Let \(V_s, V_t \subseteq \cup_{i=1}^k V_i\) be the bipartition of all \(G_i\) combined. Let
\begin{equation*}
    \varepsilon = \frac{1}{1 + \sum_{v \in V_i} |\alpha^i_v| + \sum_{i=1}^{k-1} \sum_{v \in V_i \cap V_{i+1}} \beta^i_v + b^i_v}.
\end{equation*}
Note that \(\varepsilon > 0\).
We create the auxiliary graph \(G' = (V', A)\) as follows. Let \(V' = \cup_{i=1}^k V_i \cup \cbra{s, t}\). To make clear about which version of a vertex \(v \in V_i \cap V_{i+1}\) we are talking, we will (sometimes) denote them by \(v^i\) and \(v^{i+1}\).
The arc set \(A\) contains the following arcs:
\begin{itemize}
    \item \(sv\) for all \(v \in V_i \cap V_s\) and \(i = 1, \ldots, k\), with flow-capacity \(1 + \varepsilon \alpha^i_v\);
    \item \(vt\) for all \(v \in V_i \cap V_t\) and \(i = 1, \ldots, k\), with flow-capacity \(1 + \varepsilon \alpha^i_v\);
    \item \(v^{i+1} v^i\) for all \(v \in (V_i \cap V_{i+1}) \cap V_s\), \(i = 1, \ldots, k-1\), with flow-capacity \(\varepsilon \beta^i_v\);
    \item \(v^i v^{i+1}\) for all \(v \in (V_i \cap V_{i+1}) \cap V_s\), \(i = 1, \ldots, k-1\), with flow-capacity \(\varepsilon b^i_v\);
    \item \(v^i v^{i+1}\) for all \(v \in (V_i \cap V_{i+1}) \cap V_t\), \(i = 1, \ldots, k-1\), with flow-capacity \(\varepsilon \beta^i_v\);
    \item \(v^{i+1} v^i\) for all \(v \in (V_i \cap V_{i+1}) \cap V_t\), \(i = 1, \ldots, k-1\), with flow-capacity \(\varepsilon b^i_v\);
    \item \(uv\) for all \(uv \in \bigcup_{i=1}^k E_i\), such that \(u \in V_s\), \(v \in V_t\), without upperbound on the flow-capacity.
\end{itemize}

We can formulate the maximum flow problem in \(G'\) as an LP as follows.
\begin{equation}
\label{eq:multistage_flow_LP}
\begin{split}
    \max \quad 
        & \sum_{v \in V_s} f_{sv} \\
    \text{s.t.} \quad
        & f_{sv} + \1[v \in V_{i-1}] (f_{v^{i-1} v} - f_{v v^{i-1}}) + \1[v \in V_{i+1}] (f_{v^{i+1} v} - f_{v v^{i+1}}) - \sum_{u: uv \in E_i} f_{vu} = 0 \\&\hspace{9cm}\quad \forall v \in V_i \cap V_s, \forall i = 1, \ldots, k \\
        & \sum_{u: uv \in E_i} f_{uv} + \1[v \in V_{i-1}] (f_{v^{i-1} v} - f_{v v^{i-1}}) + \1[v \in V_{i+1}] (f_{v^{i+1} v} - f_{v v^{i+1}}) - f_{vt} = 0 \\&\hspace{9cm}\quad \forall v \in V_i \cap V_t, \forall i = 1, \ldots, k \\
        & f_{sv} \leq 1 + \varepsilon \alpha^i_v \quad \forall v \in V_i \cap V_s, \forall i = 1, \ldots, k \\
        & f_{vt} \leq 1 + \varepsilon \alpha^i_v \quad \forall v \in V_i \cap V_t, \forall i = 1, \ldots, k \\
        & f_{v^{i+1} v^i} \leq \varepsilon \beta^i_v \quad \forall v \in (V_i \cap V_{i+1}) \cap V_s, \forall i = 1, \ldots, k-1 \\
        & f_{v^i v^{i+1}} \leq \varepsilon b^i_v \quad \forall v \in (V_i \cap V_{i+1}) \cap V_s, \forall i = 1, \ldots, k-1 \\
        & f_{v^i v^{i+1}} \leq \varepsilon \beta^i_v \quad \forall v \in (V_i \cap V_{i+1}) \cap V_t, \forall i = 1, \ldots, k-1 \\
        & f_{v^{i+1} v^i} \leq \varepsilon b^i_v \quad \forall v \in (V_i \cap V_{i+1}) \cap V_t, \forall i = 1, \ldots, k-1 \\
        & f \in \R^A_{\geq 0}
\end{split}
\end{equation}
For this to have a feasible flow, the flow-capacities need to be nonnegative: \(1 + \varepsilon \alpha^i_v \geq 0\) for all \(v \in V_i\), \(i = 1, \ldots, k\), \(\varepsilon \beta^i_v \geq 0\) for all \(v \in V_i \cap V_{i+1}\), \(i = 1, \ldots, k-1\), and \(\varepsilon b^i_v \geq 0\) for all \(v \in V_i \cap V_{i+1}\), \(i = 1, \ldots, k-1\). The later two are satisfied as \(\varepsilon > 0\) and we set \(\beta, b \geq 0\). For any \(v \in V_i\), \(i = 1, \ldots, k\), we have
\begin{equation*}
    \alpha^i_v 
    \geq - |\alpha^i_v|
    \geq - \left( 1 + \sum_{v \in V_i} |\alpha^i_v| + \sum_{i=1}^{k-1} \sum_{v \in V_i \cap V_{i+1}} \beta^i_v + b^i_v \right)
    = -\frac{1}{\varepsilon},
\end{equation*}
and so, since \(\varepsilon > 0\), we have \(1 + \varepsilon \alpha^i_v \geq 1 + \varepsilon \frac{-1}{\varepsilon} = 0\). 

The dual of this flow LP is as follows.
\begin{equation}
\label{eq:multistage_flow_LP_dual}
\begin{split}
    \min \quad 
        & \sum_{i=1}^k \sum_{v \in V_i} y^i_v + \varepsilon \left( \sum_{i=1}^k \sum_{v \in V_i} \alpha^i_v y^i_v + \sum_{i=1}^{k-1} \sum_{v \in V_i \cap V_{i+1}} \beta^i_v \delta^i_v + b^i_v d^i_v \right) \\
    \text{s.t.} \quad
        & \gamma^i_v + y^i_v \geq 1 \quad \forall v \in V_i \cap V_s, \forall i = 1, \ldots, k \\
        & \gamma^i_v - \gamma^{i+1}_v + \delta^i_v \geq 0 \quad \forall v \in (V_i \cap V_{i+1}) \cap V_s, \forall i = 1, \ldots, k-1 \\
        & \gamma^{i+1}_v - \gamma^i_v + d^i_v \geq 0 \quad \forall v \in (V_i \cap V_{i+1}) \cap V_s, \forall i = 1, \ldots, k-1 \\
        & \gamma^i_v - \gamma^i_u \geq 0  \quad \forall uv \in E_i \text{ such that } u \in V_s, v \in V_t, \forall i = 1, \ldots, k \\
        & - \gamma^i_v + y^i_v \geq 0 \quad \forall v \in V_i \cap V_t, \forall i = 1, \ldots, k \\
        & \gamma^{i+1}_v - \gamma^i_v + \delta^i_v \geq 0 \quad \forall v \in (V_i \cap V_{i+1}) \cap V_t, \forall i = 1, \ldots, k-1 \\
        & \gamma^i_v - \gamma^{i+1}_v + d^i_v \geq 0 \quad \forall v \in (V_i \cap V_{i+1}) \cap V_t, \forall i = 1, \ldots, k-1 \\
        & y^i \in \R^{V_i}_{\geq 0} \quad \forall i = 1, \ldots, k\\
        & \delta^i \in \R^{V_i \cap V_{i+1}}_{\geq 0} \quad \forall i = 1, \ldots, k-1 \\
        & d^i \in \R^{V_i \cap V_{i+1}}_{\geq 0} \quad \forall i = 1, \ldots, k-1 \\
        & \gamma^i \in \R^{V_i} \quad \forall i = 1, \ldots, k
\end{split}
\end{equation}

\begin{lemma}
    \label{lem:multistage_int_polyhedron}
    The feasible region of \eqref{eq:multistage_flow_LP_dual} is an integral polyhedron.
\end{lemma}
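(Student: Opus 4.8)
The plan is to mirror the proof of \Cref{lem: integral polyhedron} almost verbatim, since \eqref{eq:multistage_flow_LP} and \eqref{eq:multistage_flow_LP_dual} play exactly the roles that \eqref{eq: flow LP} and \eqref{eq: flow LP dual} played in the two-stage case. The argument is the same three-step chain: (i) the constraint matrix of \eqref{eq:multistage_flow_LP} is totally unimodular (TU) because it is a maximum flow LP; (ii) the constraint matrix of the dual \eqref{eq:multistage_flow_LP_dual} is its transpose, and transposes of TU matrices are TU; (iii) a polyhedron defined by a TU constraint matrix together with an integral right-hand side is integral. The conclusion of the lemma then reads off immediately.

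The one point that genuinely needs checking is step (i): that \eqref{eq:multistage_flow_LP} really is a standard maximum flow LP on the auxiliary graph \(G'\), so that its constraint matrix is, after appending the capacity rows, the node-arc incidence matrix of a directed graph, which is known to be TU. I would verify that each flow variable \(f_a\) appears in the conservation constraints with coefficient \(+1\) at the head of \(a\) and \(-1\) at the tail, and in at most one capacity constraint with coefficient \(+1\). The mildly delicate feature is the splitting of a shared vertex \(v \in V_i \cap V_{i+1}\) into the two copies \(v^i\) and \(v^{i+1}\), together with the indicator terms \(\1[v \in V_{i-1}]\) and \(\1[v \in V_{i+1}]\) in the conservation equations. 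I would argue that these indicators merely record which of the inter-stage arcs \(v^i v^{i+1}\) and \(v^{i+1} v^i\) actually exist in \(G'\) (namely only when \(v\) belongs to both adjacent stages), so that each conservation row is precisely the incidence row of the corresponding node of \(G'\), with \(s\) and \(t\) omitted as usual. Hence the matrix is TU.

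Given (i), steps (ii) and (iii) are routine: transposing preserves total unimodularity, and every constraint of \eqref{eq:multistage_flow_LP_dual} has right-hand side \(0\) or \(1\), hence integral. The main obstacle, if any, is purely bookkeeping: confirming that the indicator-laden conservation constraints of \eqref{eq:multistage_flow_LP} coincide exactly with the node-arc incidence structure of \(G'\), rather than introducing a spurious entry outside \(\{0, \pm 1\}\). Once that is confirmed, the lemma follows by the identical reasoning used for \Cref{lem: integral polyhedron}, and I would simply cite \cite{Schrijver2003Combinatorial} for the facts about TU matrices.
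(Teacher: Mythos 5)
Your proposal is correct and follows essentially the same three-step argument as the paper's own proof (TU of the max-flow constraint matrix, preservation of TU under transposition, and integrality of polyhedra with TU matrix and integral right-hand side); your extra check that the indicator-laden conservation constraints are genuinely the node-arc incidence rows of \(G'\) is a reasonable precaution but does not change the route.
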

\begin{proof}
    The constraint matrix of a maximum flow LP is always TU; in particular the constraint matrix of \eqref{eq:multistage_flow_LP} is TU.

    The transpose of a TU matrix is TU. The constraint matrix of a dual LP is the transpose of the constraint matrix of the primal. Together they imply that the constraint matrix of \eqref{eq:multistage_flow_LP_dual} is TU.

    Finally, a polyhedron with a TU constraint matrix and integral right hand side is an integral polyhedron; in particular, the feasible region of \eqref{eq:multistage_flow_LP_dual} is an integral polyhedron.
\end{proof}

We map an integral optimal solution for \eqref{eq:multistage_flow_LP_dual} to an integral optimal solution for \eqref{eq:multistage_LP} in two steps. First we map it to the following LP.
\begin{equation}
\label{eq:multistage_LP_eps}
\begin{split}
\min \quad 
    & \sum_{i=1}^k 1^\top y^i + \varepsilon \left( \sum_{i=1}^k \sum_{v \in V_i} \alpha^i_v y^i_v + \sum_{i=1}^{k-1} \sum_{v \in V_i \cap V_{i+1}} \beta^i_v \delta^i_v + b^i_v d^i_v \right) \\
\text{s.t.} \quad
    & y^i_u + y^i_v \geq 1 \quad \forall uv \in E_i, \forall i = 1, \ldots, k \\
    & y^i \in \R^{V_i}_{\geq 0} \quad \forall i = 1, \ldots, k \\
    & y^i_v -  y^{i+1}_v \leq \delta^i_v \quad \forall v \in V_i \cap V_{i+1}, \forall i = 1, \ldots, k-1 \\
    & y^{i+1}_v - y^i_v \leq d^i_v \quad \forall v \in V_i \cap V_{i+1}, \forall i = 1, \ldots, k-1 \\
    & \delta^i \in \R^{V_i \cap V_{i+1}}_{\geq0} \quad \forall i = 1, \ldots, k-1 \\
    & d^i \in \R^{V_i \cap V_{i+1}}_{\geq0} \quad \forall i = 1, \ldots, k-1
\end{split}
\end{equation}

\begin{lemma}
    \label{lem:flow_dual_to_multistage_eps}
    We can obtain an optimal solution for \eqref{eq:multistage_LP_eps} from an optimal solution for \eqref{eq:multistage_flow_LP_dual}.
\end{lemma}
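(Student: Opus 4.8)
The plan is to follow the proof of \Cref{lem: flow dual to 2SAG eps} essentially verbatim in structure, replacing the first-/second-stage labels $0$ and $S$ by an arbitrary stage index $i \in \{1,\dots,k\}$ and the bipartition $(V_1,V_2)$ by $(V_s,V_t)$. As there, I would prove the stronger statement that the feasible regions of \eqref{eq:multistage_LP_eps} and \eqref{eq:multistage_flow_LP_dual} are in objective-preserving correspondence, split into two claims, and then deduce that an optimal solution maps to an optimal solution by transporting any hypothetically better solution back and forth.

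For the first claim (every feasible point of \eqref{eq:multistage_LP_eps} lifts to a feasible point of \eqref{eq:multistage_flow_LP_dual} of equal value) I would set $\hat\gamma^i_v = 1-\hat y^i_v$ for $v\in V_i\cap V_s$ and $\hat\gamma^i_v=\hat y^i_v$ for $v\in V_i\cap V_t$. Since neither objective involves $\gamma$, the value is preserved, and each dual inequality reduces to a constraint of \eqref{eq:multistage_LP_eps}: the two ``value'' inequalities $\gamma^i_v+y^i_v\ge 1$ and $-\gamma^i_v+y^i_v\ge 0$ hold with equality by construction, each edge inequality $\gamma^i_v-\gamma^i_u\ge 0$ (with $u\in V_s$, $v\in V_t$) becomes $\hat y^i_u+\hat y^i_v-1\ge 0$, and each $\delta$/$d$ inequality collapses to exactly $\hat\delta^i_v\ge \hat y^i_v-\hat y^{i+1}_v$ or $\hat d^i_v\ge \hat y^{i+1}_v-\hat y^i_v$. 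The point worth flagging is that this last computation only works because $v$ lies on the \emph{same} side of the bipartition in stages $i$ and $i+1$, so that the constant $1$ cancels; this is precisely where the common-bipartition hypothesis is used, and it is already encoded in the arc set of $G'$, whose copy-arcs are defined only within $V_s$ or within $V_t$.

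For the second claim (dropping $\gamma$ restricts a feasible point of \eqref{eq:multistage_flow_LP_dual} to a feasible point of \eqref{eq:multistage_LP_eps} of equal value) the edge inequalities follow from the telescoping identity $\hat y^i_u+\hat y^i_v=(\hat\gamma^i_u+\hat y^i_u)+(\hat\gamma^i_v-\hat\gamma^i_u)+(-\hat\gamma^i_v+\hat y^i_v)\ge 1$, and the difference inequalities follow by an analogous telescoping \emph{provided} the value inequalities are tight, i.e.\ $\gamma^i_v+y^i_v=1$ on the source side and $-\gamma^i_v+y^i_v=0$ on the sink side. Hence the crux is to argue that a feasible dual solution can be normalized, without loss of generality and without changing the objective (which is free of $\gamma$), to make these equalities hold.

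The hard part is exactly this normalization, which is more delicate than in the two-stage case. In \eqref{eq: flow LP dual} a vertex appears in at most two graphs, so lowering a single $\gamma_v$ and at most one partner $\gamma^S_v$ sufficed; here a vertex may belong to a whole run of consecutive stages, and lowering $\gamma^i_v$ for $v\in V_s$ can simultaneously tighten the forward constraint $\gamma^i_v-\gamma^{i+1}_v+\delta^i_v\ge 0$ and the backward constraint $\gamma^i_v-\gamma^{i-1}_v+d^{i-1}_v\ge 0$, forcing me to propagate the same decrease to the adjacent copies $\gamma^{i\pm1}_v$ along the chain of stages containing $v$. I would carry this out as in the two-stage proof: decrease $\gamma^i_v$ toward the value making $\gamma^i_v+y^i_v=1$, and whenever a consecutive-stage difference constraint goes tight, decrease the adjacent copy by the same amount, checking at each step that the edge constraints (whose left-hand sides only increase), the opposite difference constraint (where a common shift cancels), and the neighbour's value inequality remain satisfied. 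Because each vertex lies in only finitely many stages the propagation terminates, and the sink-side normalization is handled symmetrically. With both claims established, any optimal solution of \eqref{eq:multistage_flow_LP_dual} maps to a feasible solution of \eqref{eq:multistage_LP_eps} of equal value that must itself be optimal, since a strictly better one could be mapped back to \eqref{eq:multistage_flow_LP_dual} with the same value, contradicting optimality.
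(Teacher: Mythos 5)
Your proposal is correct and follows essentially the same route as the paper: the same lift $\hat\gamma^i_v = 1-\hat y^i_v$ on the $V_s$ side and $\hat\gamma^i_v = \hat y^i_v$ on the $V_t$ side, the same telescoping identity for the edge constraints, and the same normalization of the $\gamma$'s to make the value inequalities tight, propagated along the maximal run of consecutive stages containing $v$ (the paper organizes this by lowering all copies in such a run by a common $\eta$ at once, which is the same mechanism as your step-by-step propagation). You also correctly isolate the two points that matter: the common-bipartition hypothesis is what makes the constants cancel in the difference constraints, and the multistage normalization must handle both the forward and backward difference constraints at once.
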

\begin{proof}
    \begin{claim}
        Any feasible solution for \eqref{eq:multistage_LP_eps} can be mapped to a feasible solution for \eqref{eq:multistage_flow_LP_dual} with the same objective value.
    \end{claim}
    \begin{proof}
        Let \(\hat{\mu} = (\hat{y}^i \text{ for } i = 1, \ldots, k, \hat{\delta}^i \text{ for } i = 1, \ldots, k-1, \hat{d}^i \text{ for } i = 1, \ldots, k-1)\) be a feasible solution for \eqref{eq:multistage_LP_eps}. We extend it to a solution \(\hat{\mu}_{\text{ext}}\) for \eqref{eq:multistage_flow_LP_dual} by setting \(\hat{\gamma}^i_v = 1 - \hat{y}^i_v\) for all \(v \in V_i \cap V_s\), \(i = 1, \ldots, k\), and \(\hat{\gamma}^i_v = \hat{y}^i_v\) for all \(v \in V_i \cap V_t\), \(i = 1, \ldots, k\). 
        It is clear that \(\hat{\mu}_{\text{ext}}\) has the same objective value as \(\hat{\mu}\), as the objective functions of the two linear programs are the same and do not involve the \(\gamma\) variables. We next show that \(\hat{\mu}_{\text{ext}}\) is feasible.

        The constraints \(\gamma^i_v + y^i_v \geq 1\) and \(- \gamma^i_v + y^i_v \geq 0\) of \eqref{eq:multistage_flow_LP_dual} are satisfied by \(\hat{\mu}_{\text{ext}}\) by definition. Nonnegativity of \(y^i\) for \(i = 1, \ldots, k\), and of \(\delta^i\) and \(d^i\) for \(i = 1, \ldots, k-1\) are satisfied by \(\hat{\mu}_{\text{ext}}\), because \(\hat{\mu}\) is feasible for \eqref{eq:multistage_LP_eps}.

        Let \(i = 1, \ldots, k-1\) and \(v \in (V_i \cap V_{i+1}) \cap V_s\). We have
        \begin{equation*}
            \hat{\gamma}^i_v - \hat{\gamma}^{i+1}_v + \hat{\delta}^i_v
            = 1 - \hat{y}^i_v - (1 - \hat{y}^{i+1}_v) + \hat{\delta}^i_v
            = \hat{y}^{i+1}_v - \hat{y}^i_v + \hat{\delta}^i_v
            \geq 0,
        \end{equation*}
        and
        \begin{equation*}
            \hat{\gamma}^{i+1}_v - \hat{\gamma}^i_v + \hat{d}^i_v
            = 1 - \hat{y}^{i+1}_v - (1 - \hat{y}^i_v) + \hat{d}^i_v
            = \hat{y}^i_v - \hat{y}^{i+1}_v + \hat{d}^i_v
            \geq 0,
        \end{equation*}
        where both times the last inequality follows from \(\hat{\mu}\)'s feasibility for \eqref{eq:multistage_LP_eps}. Similarly, we have for \(i = 1, \ldots, k-1\) and \(v \in (V_i \cap V_{i+1}) \cap V_t\)
        \begin{equation*}
            \hat{\gamma}^{i+1}_v - \hat{\gamma}^i_v + \hat{\delta}^i_v
            = \hat{y}^{i+1}_v - \hat{y}^i_v + \hat{\delta}^i_v
            \geq 0,
        \end{equation*}
        and
        \begin{equation*}
            \hat{\gamma}^i_v - \hat{\gamma}^{i+1}_v + \hat{d}^i_v
            = \hat{y}^i_v - \hat{y}^{i+1}_v + \hat{d}^i_v
            \geq 0.
        \end{equation*}
        Finally, let \(i = 1, \ldots, k\) and \(uv \in E_i\) such that \(u \in V_s\) and \(v \in V_t\). We have
        \begin{equation*}
            \hat{\gamma}^i_v - \hat{\gamma}^i_u
            = \hat{y}^i_v - (1 - \hat{y}^i_u)
            = \hat{y}^i_v + \hat{y}^i_u - 1
            \geq 0.
        \end{equation*}
    \end{proof}

    \begin{claim}
        Any feasible solution to \eqref{eq:multistage_flow_LP_dual} can be mapped to a feasible solution to \eqref{eq:multistage_LP_eps} with the same objective value.
    \end{claim}
    \begin{proof}
        Let \(\hat{\mu} = (\hat{y}^i \text{ for } i = 1, \ldots, k, \hat{\delta}^i \text{ for } i = 1, \ldots, k-1, \hat{d}^i \text{ for } i = 1, \ldots, k-1, \hat{\gamma}^i \text{ for } i = 1, \ldots, k)\) be a feasible solution for \eqref{eq:multistage_flow_LP_dual}. We restrict it to a solution \(\hat{\mu}_{\text{res}}\) for \eqref{eq:multistage_LP_eps} by disregarding the \(\gamma\) variables. It is clear that \(\hat{\mu}_{\text{res}}\) has the same objective value as \(\hat{\mu}\), as the objective functions of the two linear programs are the same and do not involve the \(\gamma\) variables. We next show that \(\hat{\mu}_{\text{res}}\) is feasible.

        First observe that \(\hat{\mu}_{\text{res}} \geq 0\).

        Let \(i = 1, \ldots, k\) and \(uv \in E_i\) such that \(u \in V_s\) and \(v \in V_t\). We have
        \begin{equation*}
            \hat{y}^i_u + \hat{y}^i_v
            = (\hat{\gamma}^i_u + \hat{y}^i_u) + (\hat{\gamma}^i_v - \hat{\gamma}^i_u) + (- \hat{\gamma}^i_v + \hat{y}^i_v)
            \geq 1 + 0 + 0 = 1.
        \end{equation*}

        Let \(i = 1, \ldots, k\), \(v \in (V_i \cap V_{i+1}) \cap V_s\). We will show that without loss of generality we can assume that \(\hat{\gamma}^j_u + \hat{y}^j_u = 1\) for \(u \in V_j \cap V_S\), \(j = 1, \ldots, k\), so in particular for \(u = v\) and \(j = i\), \(j = i + 1\). Consequently, we have
        \begin{equation*}
        \begin{split}
            \hat{\delta}^i_v + \hat{y}^{i+1}_v - \hat{y}^i_v
            &= \hat{\delta}^i_v + \hat{y}^{i+1}_v - \hat{y}^i_v + \hat{\gamma}^{i+1}_v - \hat{\gamma}^{i+1}_v + \hat{\gamma}^i_v - \hat{\gamma}^i_v \\
            &= (\hat{\delta}^i_v + \hat{\gamma}^i_v - \hat{\gamma}^{i+1}_v) + (\hat{\gamma}^{i+1}_v + \hat{y}^{i+1}_v) - (\hat{\gamma}^i_v + \hat{y}^i_v) \\
            & \geq 0 + 1 - 1 = 0,
        \end{split}
        \end{equation*}
        and
        \begin{equation*}
        \begin{split}
            \hat{d}^i_v + \hat{y}^i_v - \hat{y}^{i+1}_v
            &= \hat{d}^i_v + \hat{y}^i_v - \hat{y}^{i+1}_v + \hat{\gamma}^i_v - \hat{\gamma}^i_v + \hat{\gamma}^{i+1}_v - \hat{\gamma}^{i+1}_v \\
            &= (\hat{d}^i_v + \hat{\gamma}^{i+1}_v - \hat{\gamma}^i_v) + (\hat{\gamma}^i_v + \hat{y}^i_v) - (\hat{\gamma}^{i+1}_v + \hat{y}^{i+1}_v) \\
            & \geq 0 + 1 - 1 = 0.
        \end{split}
        \end{equation*}
        Now, let \(i = 1, \ldots, k\), \(v \in (V_i \cap V_{i+1}) \cap V_t\). We will also show that without loss of generality we can assume that \(- \hat{\gamma}^j_u + \hat{y}^j_u= 0\) for \(u \in V_j \cap V_t\), \(j = 1, \ldots, k\). Consequently, we have
        \begin{equation*}
        \begin{split}
            \hat{\delta}^i_v + \hat{y}^{i+1}_v - \hat{y}^i_v
            &= \hat{\delta}^i_v + \hat{y}^{i+1}_v - \hat{y}^i_v + \hat{\gamma}^{i+1}_v - \hat{\gamma}^{i+1}_v + \hat{\gamma}^i_v - \hat{\gamma}^i_v \\
            &= (\hat{\delta}^i_v + \hat{\gamma}^{i+1}_v - \hat{\gamma}^i_v) + (- \hat{\gamma}^{i+1}_v + \hat{y}^{i+1}_v) - (- \hat{\gamma}^i_v + \hat{y}^i_v) \\
            & \geq 0 + 0 - 0 = 0,
        \end{split}
        \end{equation*}
        and
        \begin{equation*}
        \begin{split}
            \hat{d}^i_v + \hat{y}^i_v - \hat{y}^{i+1}_v
            &= \hat{d}^i_v + \hat{y}^i_v - \hat{y}^{i+1}_v + \hat{\gamma}^i_v - \hat{\gamma}^i_v + \hat{\gamma}^{i+1}_v - \hat{\gamma}^{i+1}_v \\
            &= (\hat{d}^i_v + \hat{\gamma}^i_v - \hat{\gamma}^{i+1}_v) + (- \hat{\gamma}^i_v + \hat{y}^i_v) - (- \hat{\gamma}^{i+1}_v + \hat{y}^{i+1}_v) \\
            & \geq 0 + 0 - 0 = 0,
        \end{split}
        \end{equation*}
        This finishes the feasibility proof of \(\hat{\mu}_{\text{res}}\).

        To show that we can indeed assume without loss of generality that \(\hat{\gamma}^i_v + \hat{y}^i_v = 1\) for \(i = 1, \ldots, k\), \(v \in V_i \cap V_s\), suppose it does not hold, so: \(\hat{\gamma}^i_v + \hat{y}^i_v > 1\) for some \(i \in \{1, \ldots, k\}\) and \(v \in V_i \cap V_s\). In particular, let \(\hat{\gamma}^i_v + \hat{y}^i_v > 1\) for all \(i\) in a range, i.e., for all \(i \in \{j, \ldots, j+l\}\) for some \(j \in \{1, \ldots, k\}\) and integral \(l \geq 0\), such that either \(v \notin V_{j-1}\) or \(\hat{\gamma}^{j-1}_v + \hat{y}^{j-1}_v = 1\), likewise for \(j + l + 1\). 
        
        If \(\hat{y}^i_v > 0\) for some \(i \in \{j, \ldots, j+l\}\), then lower \(\hat{y}^i_v\) by \(\min\{\hat{y}^i_v, 1 - \hat{\gamma}^i_v\}\) (this does not affect feasibility as \(\hat{y}^i_v\) is contained in only this one constraint, and it improves the objective). If now \(\hat{\gamma}^i_v + \hat{y}^i_v = 1\), then we continue with a smaller range, otherwise we continue with the same range. So, we can assume that for all \(i \in \{j, \ldots, j+l\}\), we have \(\hat{y}^i_v = 0\) and \(\hat{\gamma}^i_v > 1\). Let \(\eta > 0\) be small enough such that if we set \((\hat{\gamma}^i_v)' = \hat{\gamma}^i_v - \eta\) (this does not change the objective) for all \(i \in \{j, \ldots, j+l\}\), we still have  \((\hat{\gamma}^i_v)' \geq 1\).

        For \(i \in \{j, \ldots, j+l\}\) and \(u \in V_t\) such that \(uv \in E_i\), we have
        \begin{equation*}
            \hat{\gamma}^i_u - (\hat{\gamma}^i_v)'
            = \hat{\gamma}^i_u - (\hat{\gamma}^i_v - \eta)
            = \hat{\gamma}^i_u - \hat{\gamma}^i_v + \eta
            \geq \eta
            > 0.
        \end{equation*}
        For \(i \in \{j, \ldots, j+l-1\}\), we have
        \begin{equation*}
            (\hat{\gamma}^i_v)' - (\hat{\gamma}^{i+1}_v)' + \hat{\delta}^i_v
            = (\hat{\gamma}^i_v - \eta) - (\hat{\gamma}^{i+1}_v - \eta) + \hat{\delta}^i_v
            = \hat{\gamma}^i_v - \hat{\gamma}^{i+1}_v + \hat{\delta}^i_v
            \geq 0,
        \end{equation*}
        and
        \begin{equation*}
            (\hat{\gamma}^{i+1}_v)' - (\hat{\gamma}^i_v)' + \hat{d}^i_v
            = (\hat{\gamma}^{i+1}_v - \eta) - (\hat{\gamma}^i_v - \eta) + \hat{d}^i_v
            = \hat{\gamma}^{i+1}_v - \hat{\gamma}^i_v + \hat{d}^i_v
            \geq 0.
        \end{equation*}
        If \(v \in V_{i-1}\), then \(\hat{\gamma}^{i-1}_v + \hat{y}^{i-1}_v = 1\), and so
        \begin{equation*}
            \hat{\gamma}^{i-1}_v - (\hat{\gamma}^i_v)' + \hat{\delta}^{i-1}_v
            = \hat{\gamma}^{i-1}_v - (\hat{\gamma}^i_v - \eta) + \hat{\delta}^{i-1}_v
            = \hat{\gamma}^{i-1}_v - \hat{\gamma}^i_v + \hat{\delta}^{i-1}_v + \eta
            \geq \eta 
            > 0,
        \end{equation*}
        and
        \begin{equation*}
        \begin{split}
            (\hat{\gamma}^i_v)' - \hat{\gamma}^{i-1}_v + \hat{\delta}^{i-1}_v
            &= (\hat{\gamma}^i_v - \eta) - (1 - \hat{y}^{i-1}_v) + \hat{\delta}^{i-1}_v\\
            &= (\hat{\gamma}^i_v - \eta - 1) + \hat{y}^{i-1}_v + \hat{\delta}^{i-1}_v
            \geq 0 + 0 + 0 
            = 0.
        \end{split}
        \end{equation*}
        If \(v \in V_{j+l+1}\), then \(\hat{\gamma}^{j+l+1}_v + \hat{y}^{j+l+1}_v = 1\), and so
        \begin{equation*}
        \begin{split}
            (\hat{\gamma}^{j+l}_v)' - \hat{\gamma}^{j+l+1}_v + \hat{\delta}^{j+l}_v
            &= (\hat{\gamma}^{j+l}_v - \eta) - (1 - \hat{y}^{j+l+1}_v) + \hat{\delta}^{j+l}_v \\
            &= (\hat{\gamma}^{j+l}_v - \eta - 1) + \hat{y}^{j+l+1}_v + \hat{\delta}^{j+l}_v
            \geq 0 + 0 + 0 
            = 0,
        \end{split}
        \end{equation*}
        and
        \begin{equation*}
        \begin{split}
            \hat{\gamma}^{j+l+1}_v - (\hat{\gamma}^{j+l}_v)' + \hat{\delta}^{j+l}_v
            &= \hat{\gamma}^{j+l+1}_v - (\hat{\gamma}^{j+l}_v - \eta) + \hat{\delta}^{j+l}_v\\
            &= \hat{\gamma}^{j+l+1}_v - \hat{\gamma}^{j+l}_v + \hat{\delta}^{j+l}_v + \eta
            \geq \eta 
            > 0.
        \end{split}
        \end{equation*}
        So we find that the solution with \(\hat{\gamma}^i_v\) replaced by \((\hat{\gamma}^i_v)'\) for all \(i \in \{j, \ldots, j+l\}\) is feasible. By possibly repeating this argument, we can decrease all \(\hat{\gamma}^i_v\) to \(1\), such that \(\hat{\gamma}^i_v + \hat{y}^i_v = 1\) holds for all \(i \in \{1, \ldots, k\}\) and \(v \in V_i \cap V_s\).

        By similar arguments we can show that we can also assume without loss of generality that \(-\hat{\gamma}^i_v + \hat{y}^i_v = 0\) for \(i = 1, \ldots, k\), \(v \in V_i \cap V_t\).
    \end{proof}

    Now we can map an optimal solution for \eqref{eq:multistage_flow_LP_dual} to a solution for \eqref{eq:multistage_LP_eps} with the same objective value, as described above. This solution is optimal for \eqref{eq:multistage_LP_eps}, as otherwise we could find a better solution, map the better solution back to a solution for \eqref{eq:multistage_flow_LP_dual} with the same objective value, contradicting the optimality of the starting solution.
\end{proof}

\begin{lemma}
    \label{lem:multistage_eps_to_multistage}
    An integral optimal solution for \eqref{eq:multistage_LP_eps} is also an integral optimal solution for \eqref{eq:multistage_LP}.
\end{lemma}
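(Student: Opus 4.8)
The plan is to mirror the proof of \Cref{lem: 2SAG eps to 2SAG}, adapted to the multistage notation. The two linear programs \eqref{eq:multistage_LP_eps} and \eqref{eq:multistage_LP} differ in exactly one respect: \eqref{eq:multistage_LP_eps} omits the equalities \(1^\top y^i = \nu(G_i)\) and instead adds the penalty \(\sum_{i=1}^k 1^\top y^i\) to the objective (with the original arbitrary objective scaled by \(\varepsilon\)). So I would let \(\hat{\mu} = (\hat{y}^i, \hat{\delta}^i, \hat{d}^i)\) be an integral optimal solution of \eqref{eq:multistage_LP_eps}, and prove the lemma in two steps: first that \(\hat{\mu}\) is feasible for \eqref{eq:multistage_LP}, i.e.\ that \(1^\top \hat{y}^i = \nu(G_i)\) for every \(i\); and second that \(\hat{\mu}\) is optimal for \eqref{eq:multistage_LP}.

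For feasibility, the one-line idea is that the penalty term \(\sum_i 1^\top y^i\) drives each \(1^\top \hat{y}^i\) down to its minimum possible value, while the choice of \(\varepsilon\) guarantees the perturbation term can never compensate for an integer drop in the penalty. Concretely, I would first argue \(\hat{y}^i \leq 1\): if some coordinate were \(\geq 2\), lowering it by \(1\) keeps all edge constraints satisfied (the other endpoint is \(\geq 0\)), decreases \(\sum_i 1^\top y^i\) by exactly \(1\), and---after resetting the adjacent difference variables \(\hat{\delta}^{i-1},\hat{d}^{i-1},\hat{\delta}^{i},\hat{d}^{i}\) to their minimal feasible values, which change by at most \(1\) each---changes the \(\varepsilon\)-perturbation by less than \(1\) in absolute value, since \(\varepsilon(1 + \sum_{i,v}|\alpha^i_v| + \sum_{i,v}\beta^i_v + b^i_v) = 1\). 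This strict improvement contradicts optimality, so \(\hat{y}^i\in\{0,1\}^{V_i}\) is an integral vertex cover, whence \(1^\top\hat{y}^i \geq \tau_f(G_i) = \nu(G_i)\) by König's theorem from the preliminaries. If \(1^\top\hat{y}^i > \nu(G_i)\), integrality forces \(1^\top\hat{y}^i \geq \nu(G_i)+1\), and replacing \(\hat{y}^i\) by an integral minimum vertex cover (again resetting the adjacent \(\delta,d\) to values in \([0,1]\)) yields the same strict improvement and the same contradiction. Hence \(1^\top\hat{y}^i = \nu(G_i)\) for all \(i\), so \(\hat{\mu}\) satisfies every constraint of \eqref{eq:multistage_LP}.

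For optimality, suppose some feasible \(\tilde{\mu}\) for \eqref{eq:multistage_LP} had strictly smaller value in the arbitrary objective \(\sum_{i,v}\alpha^i_v y^i_v + \sum_{i,v}\beta^i_v\delta^i_v + b^i_v d^i_v\). Every feasible point of \eqref{eq:multistage_LP} obeys \(1^\top y^i = \nu(G_i)\), so both \(\tilde{\mu}\) and \(\hat{\mu}\) assign the penalty term the same value \(\sum_i \nu(G_i)\); consequently the strict gap in the arbitrary objective, scaled by \(\varepsilon > 0\), carries over to a strict gap in the objective of \eqref{eq:multistage_LP_eps}, contradicting the optimality of \(\hat{\mu}\) there. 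Together with \Cref{lem:multistage_int_polyhedron} and \Cref{lem:flow_dual_to_multistage_eps}, this completes the proof of \Cref{thm:multistage_int_solution}. The one genuine subtlety---and the step I would take the most care with---is the bookkeeping in the feasibility step: because in the multistage chain each stage \(i\) is coupled through difference variables to both stage \(i-1\) and stage \(i+1\), changing a single \(\hat{y}^i\) forces a simultaneous update of (up to) four families of \(\delta,d\) variables, and one must verify that each such update moves the perturbation term by at most \(1\) so that the \(\varepsilon\)-bound still applies.
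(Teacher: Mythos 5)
Your proposal is correct and follows essentially the same route as the paper: show feasibility by arguing that \(1^\top \hat{y}^i > \nu(G_i)\) would let one swap in a minimum vertex cover, gaining at least \(1\) in the penalty term while the \(\varepsilon\)-scaled perturbation (bounded via the choice of \(\varepsilon\) and the fact that all relevant variables lie in \([0,1]\)) cannot compensate, and then deduce optimality because every feasible point of \eqref{eq:multistage_LP} has the same penalty value \(\sum_i \nu(G_i)\). Your explicit preliminary step establishing \(\hat{y}^i \leq 1\) and your attention to the coupling with both neighbouring stages \(i-1\) and \(i+1\) are details the paper's proof asserts only implicitly, so they are welcome but do not change the argument.
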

\begin{proof}
    Let \(\hat{\gamma} = (\hat{y}^i \text{ for } i = 1, \ldots, k, \hat{\delta}^i \text{ for } i = 1, \ldots, k-1, \hat{d}^i \text{ for } i = 1, \ldots, k-1)\) be an integral optimal solution for \eqref{eq:multistage_LP_eps}.
    
    Suppose that \(1^\top \hat{y}^i > \nu(G_i)\) for some \(i = 1, \ldots, k\), then because \(\hat{y}^i\) is integral, \(1^\top \hat{y}^i \geq \nu(G) + 1\). Now replace \(\hat{y}^i\) by a minimum vertex cover \(\widetilde{y}^i\), i.e., \(1^\top \widetilde{y}^i = \nu(G_i)\), and set \(\widetilde{\delta}^i\) and \(\widetilde{d}^i\) accordingly.
    We will have \(\widetilde{y}^i \leq 1\) and \(\hat{y}^{i+1} \leq 1\), and hence also \(\widetilde{\delta}^i \leq 1\) and \(\widetilde{d}^i \leq 1\) (this also holds for superscripts other than \(i\)). Therefore,
    \begin{equation*}
    \begin{split}
        \varepsilon \left( \sum_{i=1}^k \sum_{v \in V_i} \alpha^i_v \hat{y}^i_v + \sum_{i=1}^{k-1} \sum_{v \in V_i \cap V_{i+1}} \beta^i_v \widetilde{\delta}^i_v + b^i_v \widetilde{d}^i_v \right)
        &\leq \varepsilon \left( \sum_{i=1}^k \sum_{v \in V_i} \alpha^i_v + \sum_{i=1}^{k-1} \sum_{v \in V_i \cap V_{i+1}} \beta^i_v + b^i_v  \right)\\
        &\leq \varepsilon \left( \sum_{i=1}^k \sum_{v \in V_i} |\alpha^i_v| + \sum_{i=1}^{k-1} \sum_{v \in V_i \cap V_{i+1}} \beta^i_v + b^i_v \right)\\
        &< \varepsilon \left( 1+ \sum_{i=1}^k \sum_{v \in V_i} |\alpha^i_v| + \sum_{i=1}^{k-1} \sum_{v \in V_i \cap V_{i+1}} \beta^i_v + b^i_v \right)\\
        &= 1,
    \end{split}
    \end{equation*}
    which means we obtain a strictly better solution, contradicting that \(\hat{\gamma}\) is optimal. So, \(1^\top \hat{y}^i = \nu(G_i)\). Hence, \(\hat{\gamma}\) is feasible for \eqref{eq:multistage_LP}.

    Suppose \(\hat{\gamma}\) is not optimal for \eqref{eq:multistage_LP}. Let \(\widetilde{\gamma}\) be an optimal solution for \eqref{eq:multistage_LP}. Then
    \begin{equation*}
        \sum_{k=1}^k \sum_{v \in V_i} \alpha^i_v \widetilde{y}^i_v+ \sum_{i=1}^{k-1} \sum_{v \in V_i \cap V_{i+1}} \beta^i_v \widetilde{\delta}^i_v + b^i_v \widetilde{d}^i_v
        < \sum_{k=1}^k \sum_{v \in V_i} \alpha^i_v \hat{y}^i_v+ \sum_{i=1}^{k-1} \sum_{v \in V_i \cap V_{i+1}} \beta^i_v \hat{\delta}^i_v + b^i_v \hat{d}^i_v
    \end{equation*}
    and \(1^\top \widetilde{y}^i = \nu(G_i) = 1^\top \hat{y}^i\) for all \(i = 1, \ldots, k\). It follows that
    \begin{multline*}
        \sum_{i=1}^k 1^\top \widetilde{y}^i + \varepsilon \left( \sum_{k=1}^k \sum_{v \in V_i} \alpha^i_v \widetilde{y}^i_v+ \sum_{i=1}^{k-1} \sum_{v \in V_i \cap V_{i+1}} \beta^i_v \widetilde{\delta}^i_v + b^i_v \widetilde{d}^i_v \right) \\
        < \sum_{i=1}^k 1^\top \hat{y}^i + \varepsilon \left( \sum_{k=1}^k \sum_{v \in V_i} \alpha^i_v \hat{y}^i_v+ \sum_{i=1}^{k-1} \sum_{v \in V_i \cap V_{i+1}} \beta^i_v \hat{\delta}^i_v + b^i_v \hat{d}^i_v \right),
    \end{multline*}
    because \(\varepsilon > 0\). This contradicts the optimality of \(\hat{\gamma}\) for \eqref{eq:multistage_LP_eps}, hence \(\hat{\gamma}\) must be optimal for \eqref{eq:multistage_LP}.
\end{proof}

Finally, \Cref{lem:multistage_int_polyhedron,lem:flow_dual_to_multistage_eps,lem:multistage_eps_to_multistage} prove \Cref{thm:multistage_int_solution}.

\bibliographystyle{plain}
\bibliography{bib}

\end{document}